\algrenewcommand\algorithmicindent{1.3em}
\pgfplotsset{compat=1.14}
\newcommand{\set}[1]{{\{#1\}}}
\newcommand{\ldot}{\mathpunct{.}}
\newcommand{\dom}{\mathrm{dom}}
\newcommand{\assignment}{\alpha}
\newcommand{\quant}{Q\,}
\newcommand{\bool}{\mathbb{B}}
\newcommand{\literals}{\mathit{lit}}
\newcommand{\pobj}{\mathcal{P}}
\newcommand{\resrule}{\mathrm{res}}
\newcommand{\initrule}{\mathrm{init}}
\newcommand{\redrule}{\forall\mathrm{red}}
\newcommand{\expresrule}{\forall\mathrm{exp\text{-}res}}
\newcommand{\tree}{\mathcal{T}}
\newcommand{\expandvar}{\mathit{expand\text{-}var}}
\newcommand{\expand}{\mathit{expand}}
\newcommand{\xor}{\mathrm{xor}}
\newcommand{\bfalse}{\bot}
\newcommand{\btrue}{\top}
\newcommand{\qres}{Q\text{-resolution}}
\newcommand{\expres}{\forall\text{Exp+Res}}
\newcommand{\redres}{\forall\text{Red+Res}}
\newcommand{\redexpres}{\forall\text{Red+}\forall\text{Exp+Res}}
\newcommand{\ircalc}{\text{IR-Calc}}
\newcommand{\crn}{\mathrm{CR}_n}
\newcommand{\qparity}{\mathrm{QParity}}
\newcommand{\dagn}{\mathrm{DAG}_n}
\newcommand{\caqe}{\text{CAQE}}
\newcommand{\rareqs}{\text{RAReQS}}
\newcommand{\qesto}{\text{Qesto}}
\newcommand{\depqbf}{\text{DepQBF}}
\newcommand{\ghostq}{\text{GhostQ}}
\newcommand{\bloqqer}{\text{Bloqqer}}
\newcommand{\minisat}{\text{MiniSat}}
\newcommand{\picosat}{\text{PicoSAT}}
\newcommand{\cmsat}{\text{cryptominisat}}
\newcommand{\lingeling}{\text{Lingeling}}
\begin{document}
 
\title{On Expansion and Resolution in CEGAR Based QBF Solving\thanks{Supported by the European Research Council (ERC) Grant OSARES (No.~683300).}}

\author{Leander Tentrup}

\institute{Saarland University, Saarbr\"ucken, Germany\\\email{tentrup@react.uni-saarland.de}}

\maketitle

\begin{abstract}
A quantified Boolean formula~(QBF) is a propositional formula extended with universal and existential quantification over propositions.
There are two methodologies in CEGAR based QBF solving techniques, one that is based on a refinement loop that builds partial expansions and a more recent one that is based on the communication of satisfied clauses.
Despite their algorithmic similarity, their performance characteristics in experimental evaluations are very different and in many cases orthogonal.
We compare those CEGAR approaches using proof theory developed around QBF solving and present a unified calculus that combines the strength of both approaches.
Lastly, we implement the new calculus and confirm experimentally that the theoretical improvements lead to improved performance.
\end{abstract}

\section{Introduction}

Efficient solving techniques for Boolean theories are an integral part of modern verification and synthesis methods.
Especially in synthesis, the amount of choice in the solution space leads to propositional problems of enormous size.
Quantified Boolean formulas~(QBFs) have repeatedly been considered as a candidate theory for synthesis approaches~\cite{journals/corr/FinkbeinerT15,conf/mtv/MillerSB13,conf/vmcai/BloemKS14,conf/fmcad/BloemEKKL14,conf/birthday/Finkbeiner15,conf/tacas/FaymonvilleFRT17} and recent advances in QBF solvers give rise to hope that QBF may help to increase the scalability of those approaches.

Solving quantified Boolean formulas~(QBF) using partial expansions in a counterexample guided abstraction and refinement (CEGAR) loop~\cite{journals/ai/JanotaKMC16} has proven to be very successful.
From its introduction, the corresponding solver $\rareqs$ won several QBF competitions.
In recent work, a different kind of CEGAR algorithms have been proposed~\cite{conf/ijcai/JanotaM15,conf/fmcad/RabeT15}, implemented in the solvers $\qesto$ and $\caqe$.
All those CEGAR approaches share algorithmic similarities like working recursively over the structure of the quantifier prefix and using SAT solver to enumerate candidate solutions.
However, instead of using partial expansions of the QBF as $\rareqs$ does, newer approaches base their refinements on whether a set of clauses is satisfied or not.
Despite those algorithmic similarities, the performance characteristics of the resulting solver in experimental evaluations are very different and in many cases orthogonal: While $\rareqs$ tends to perform best on instances with a low number of quantifier alternations, $\qesto$ and $\caqe$ have an advantage in instances with many alternations~\cite{conf/fmcad/RabeT15}.

Proof theory has been repeatedly used to improve the understanding of different solving techniques.
For example, the proof calculus $\expres$~\cite{journals/tcs/JanotaM15} has been developed to characterize aspects of expansion-based solving.
In this paper, we introduce a new calculus $\redres$ that corresponds to the clausal-based CEGAR approaches~\cite{conf/ijcai/JanotaM15,conf/fmcad/RabeT15}.
The levelized nature of those algorithms are reflected by the rules of this calculus, universal reduction and propositional resolution, which are applied to blocks of quantifiers.
We show that this calculus is inherently different to $\expres$ explaining the empirical performance results.
In detail, we show that $\redres$ polynomial simulates level-ordered $\qres$.
We also discuss an extension to $\redres$ that was already proposed as solving optimizations~\cite{conf/fmcad/RabeT15} and show that this extension makes the resulting calculus exponential more concise.

Further, we integrate the $\expres$ calculus as a rule that can be used within the $\redres$ calculus, leading to a unified proof calculus for all current CEGAR approaches.
We show that the unified calculus is exponential stronger than both $\expres$ and $\redres$, as well as just applying both simultaneously.
This unified calculus serves as a base for implementing an expansion refinement in the QBF solver $\caqe$.
On standard benchmark sets, the combined approach leads to a significant empirical improvement over the previous implementation.

\section{Preliminaries}

\subsection{Quantified Boolean Formulas}

We consider quantified Boolean formulas in prenex conjunctive normal form (PCNF), that is a formula consisting of a linear and consecutive quantifier prefix as well as a propositional matrix.
A \emph{matrix} is a set of clauses, and a clause is a disjunctive combination of \emph{literals} $l$, that is either a variable or its negation.

Given a clause $C = (l_1 \lor l_2 \lor \ldots \lor l_n)$, we use set notation interchangeably, that is $C$ is also represented by the set $\set{l_1,l_2,\dots,l_n}$.
Furthermore, we use standard set operations, such as union and intersection, to work with clauses.

For readability, we lift the quantification over variables to the quantification over sets of variables and denote a maximal consecutive block of quantifiers of the same type $\forall x_1\ldot \forall x_2\ldot \cdots\forall x_n\ldot\varphi$ by $\forall X\ldot\varphi$ and $\exists x_1\ldot\exists x_2\ldot\cdots\exists x_n\ldot\varphi$ by $\exists X\ldot\varphi$, accordingly, where $X=\set{x_1,\dots,x_n}$. 

Given a set of variables $X$, an \emph{assignment} of $X$ is a function $\alpha : X \rightarrow \bool$ that maps each variable $x \in X$ to either true ($\btrue$) or false ($\bfalse$).
When the domain of~$\alpha$ is not clear from context, we write $\alpha_X$.
We use the instantiation of a QBF~$\Phi$ by assignment $\alpha$, written $\Phi[\alpha]$ which removes quantification over variables in $\dom(\alpha)$ and replaces occurrences of $x \in \dom(\alpha)$ by $\alpha(x)$.
We write $\alpha \vDash \varphi$ if the assignment $\alpha$ satisfies a propositional formula $\varphi$, i.e., $\varphi[\alpha] \equiv \btrue$.

\subsection{Resolution}

Propositional resolution is a well-known method for refuting propositional formulas in conjunctive normal form~(CNF).
The resolution rule allows to \emph{merge} two clauses that contain the same variable, but in opposite signs.
\begin{align*}
  &\infer[\mathrm{res}]{C \cup C'}{C \cup \set{l} & C' \cup \set{\overline{l}}}
\end{align*}
A resolution proof $\pi$ is a series of applications of the resolution rule.
A propositional formula is unsatisfiable if there is a resolution proof that derives the empty clause.
We visualize resolution proofs by a graph where the nodes with indegree 0 are called the leaves and the unique node with outdegree 0 is called the root.
We depict the graph representation of a resolution proof in Fig.~\ref{fig:resolution}\subref{fig:resolution-example}.
The \emph{size} of a resolution proof is the number of nodes in the graph.

\begin{figure}[t]
  \centering
  \subfloat[Resolution rule]{
    \begin{tikzpicture}[thick]
      \tikzstyle{node}=[draw, rectangle, rounded corners, inner sep=5pt]
      \node[node] (root) {$C \cup C'$};
      \node[node,above left=of root] (lhs) {$C \cup \set{l}$};
      \node[node,above right=of root] (rhs) {$C' \cup \set{\overline{l}}$};
      
      \draw (root) -- (lhs)
            (root) -- (rhs)
            ;
    \end{tikzpicture}
    \label{fig:resolution-rule}
  }
  \subfloat[Resolution proof for formula $(c_1)(c_2)(\overline{c}_1 \lor \overline{c}_2)$]{
    \begin{tikzpicture}[thick]
      \tikzstyle{node}=[draw, rectangle, rounded corners, inner sep=5pt]
      \node[node] (root) {$\bfalse$};
      \node[node,above left=0.5 and 0 of root] (lhs) {$\overline{c}_1$};
      \node[node,above right=0.5 and 0 of root] (rhs) {$c_1$};
      \node[node,above left=0.5 and 0 of lhs] (lhslhs) {$\overline{c}_1 \lor \overline{c}_2$};
      \node[node,above right=0.5 and 0 of lhs] (lhsrhs) {$c_2$};
      
      \draw (root) -- (lhs)
            (root) -- (rhs)
            (lhs) -- (lhslhs)
            (lhs) -- (lhsrhs)
            ;
      \node[left=2 of root] {};
      \node[right=2 of root] {};
    \end{tikzpicture}
    \label{fig:resolution-example}
  }
  \caption{Visualization of the resolution rule as a graph.}
  \label{fig:resolution}
\end{figure}
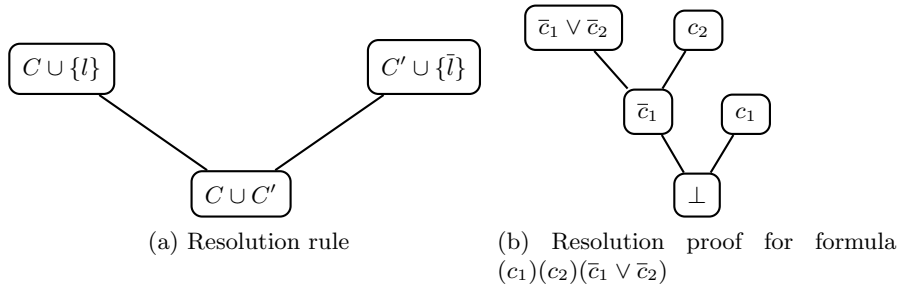

\subsection{Proof Systems}

We consider proof systems that are able to refute quantified Boolean formulas.
To enable comparison between proof systems, one uses the concept of \emph{polynomial simulation}.
A proof system $P$ polynomially simulates ($p$-simulates) $P'$ if there is a polynomial $p$ such that for every number $n$ and every formula $\Phi$ it holds that if there is a proof of $\Phi$ in $P'$ of size $n$, then there is a proof of $\Phi$ in $P$ whose size is less than $p(n)$.
We call $P$ and $P'$ polynomial equivalent, if $P'$ additionally $p$-simulates $P$.

A refutation based calculus (such as resolution) is regarded as a proof system because it can refute the negation of a formula.

Figure~\ref{fig:overview-proof-systems} gives an overview over the proof systems introduced in this paper and their relation.
An edge $P \rightarrow P'$ means that $P$ $p$-simulates $P'$ (transitive edges are omitted).
A dashed line indicates incomparability results.

\begin{figure}[t]
  \centering
  \begin{tikzpicture}[->,>=stealth',shorten >=1pt,auto,thick,scale=1,transform shape]
  \tikzstyle{state}=[draw,rectangle,rounded corners,inner sep=5pt,align=center]
  \node[state,fill=black!5] (redexpres-strengthen) {$\redexpres$\\strengthen};
  \node[state,fill=black!5,below left=1.1 and -1 of redexpres-strengthen] (redexpres) {$\redexpres$};
  \node[state,below=1.2 of redexpres] (expres) {$\expres$};
  \node[state,fill=black!5,below right=1 and -0.5 of redexpres-strengthen] (redres-strengthen) {$\redres$\\strengthen};
  \node[state,fill=black!5,below=of redres-strengthen] (redres) {$\redres$};
  \node[state,right=1.2 of redres] (level-ordered) {level-ordered\\$\qres$};
  \node[state,above right=1 and -0.7 of level-ordered] (qres) {$\qres$};
  \node[state,below right=1 and -0.7 of qres] (tree-like) {tree-like\\$\qres$};
  
  \draw (redexpres-strengthen) edge (redexpres)
  		(redexpres) edge node[swap] {\tiny Theorem~\ref{thm:expres-vs-redexpres}} (expres)
  		(redexpres) edge node[swap] {\tiny Theorem~\ref{thm:combined-stronger-than-applying-independently}} (redres)
        (redexpres-strengthen) edge (redres-strengthen)
        (redres-strengthen) edge node {\tiny Theorem~\ref{thm:separation-strengthen-rule}} (redres)
        (redres) edge[<->] node {\tiny Theorem~\ref{thm:redres-level-ordered-qres-equivalent}} (level-ordered)
        (qres) edge (level-ordered)
        (qres) edge (tree-like)
        ;
  
  \coordinate (a) at (0,-5.5);
  \coordinate (b) at (5,-5.5);
  \draw[->] (expres) .. controls (a) and (b) .. node {\tiny\cite{journals/tcs/JanotaM15}} (tree-like);
  
  \draw[-,dashed] (redexpres-strengthen) -- node {\tiny Theorem~\ref{thm:redexpres-strengthen-qres-incomparable}} (qres);
  \draw[-,dashed] (redexpres) -- node {\tiny Theorem~\ref{thm:extensions-incomparable}} (redres-strengthen);
  \draw[-,dashed] (level-ordered) -- node {\tiny\cite{journals/ipl/MahajanS16}} (tree-like);
  \draw[-,dashed] (expres) -- node {\tiny\cite{journals/ipl/MahajanS16}, Theorem~\ref{thm:redres-level-ordered-qres-equivalent}} (redres);
\end{tikzpicture}
  \caption{Overview of the proof systems and their relations. Solid arrows indicate $p$-simulation relation. Dashed lines indicate incomparability results. The gray boxes are the ones introduced in this paper.}
  \label{fig:overview-proof-systems}
\end{figure}
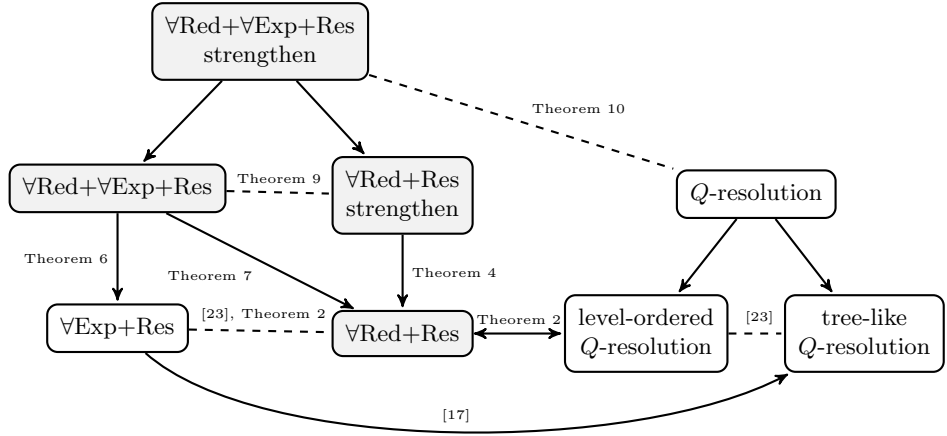

\section{Proof Calculi}

Given a PCNF formula $\quant X_1 \dots \quant X_n \ldot \bigwedge_{1 \leq i \leq m} C_i$.
We define a function $\literals(i, k)$ that returns the literals of clause $C_i$ that are bound at quantifier level $k$ ($1 \leq k \leq n$).
Further, we generalize this definition to $\literals(i, > k)$ and $\literals(i, < k)$ that return the literals bound after (before) level $k$.
We define $\literals(i, 0) = \literals(i, n+1) = \emptyset$ for every $1 \leq i \leq m$.
We use $\mathcal{C}$ to denote a set of clauses and $Q_k \in \set{\exists,\forall}$ to denote the quantification type of level $k$.

\subsection{A Proof System for Clausal Abstractions}

We start by defining the object on which our proof system $\redres$ is based on.
A \emph{proof object} $\pobj^k$ consists of a set of indices $\pobj$ where an index $i \in \pobj$ represents the $i$-th clause in the original matrix and $k$ denotes the $k$-th level of the quantifier hierarchy.
We define an operation $\literals(\pobj^k) = \bigcup_{i \in \pobj} \literals(i, k)$, that gives access to the literals of clauses contained in $\pobj^k$.
The leaves in our proof system are singleton sets $\set{i}^z$ where $z$ is the maximum quantification level of all literals in clause $C_i$.
The root of a refutation proof is the proof object $\pobj^0$ that represents the empty set, i.e., $\literals(\pobj^0) = \emptyset$.

The rules of the proof system is given in Fig.~\ref{fig:unsat_proof_system}.
It consists of three rules, an axiom rule ($\initrule$) that generates leaves, a resolution rule ($\resrule$), and a universal reduction rule ($\redrule$).
The latter two rules enable to transform a premise that is related to quantifier level $k$ into a conclusion that is related to quantifier level~$k-1$.
The universal reduction rule and the resolution rule are used for universal and existential quantifier blocks, respectively.

\begin{figure}[t]
  \begin{equation*}
    \begin{array}{ll}
      \infer[\resrule]
      { \left( \bigcup_{i \in \set{1,\dots,j}} \pobj_i \right)^{k-1} }
      {\pobj_{1}^k & \cdots & \pobj_{j}^k & \pi}
      \qquad
      &
      \begin{minipage}[c]{0.5\textwidth}
        $Q_k = \exists$\\
        $\pi \text{ is a resolution refutation proof for } \bigwedge_{1 \leq i \leq j} \literals(\pobj_i^k)$
      \end{minipage}
      \\ \\
      \infer[\redrule]
      { \pobj^{k-1} }
      { \pobj^k }
      &
      \begin{minipage}[c]{0.5\textwidth}
        $Q_k = \forall$ \\
        $\forall l \in \literals(\pobj^k) \ldot \overline{l} \notin \literals(\pobj^k)$
      \end{minipage}
      \\\\
      \infer[\initrule]{\set{i}^k}{}
      &
      \begin{minipage}[c]{0.5\textwidth}
        $1 \leq i \leq m$ \\
        $\literals(i, >k) = \emptyset$
      \end{minipage}
      \\
    \end{array}
  \end{equation*}
  \caption{The rules of the $\redres$ calculus.}
  \label{fig:unsat_proof_system}
\end{figure}

\paragraph{Resolution rule.}
There is a close connection between ($\resrule$) and the propositional resolution as ($\resrule$) merges a number of proof objects $\pobj_i^k$ of level $k$ into a single proof object of level $k-1$.
It does so by using a resolution proof for a propositional formula that is constructed from the premises $\pobj_i^k$.
This propositional formula $\bigwedge_{1 \leq i \leq j} \literals(\pobj_i^k)$ contains \emph{only} literals of level $k$.
Intuitively, this rule can be interpreted as follows: a resolution proof over those clauses rules out any possible existential assignment at quantifier level $k$, thus, one of those clauses has to be satisfied at an earlier level.

\paragraph{Universal reduction rule.}
In contrast to ($\resrule$), ($\redrule$) works on single proof objects.
It can be applied if level $k$ is universal and the premise does not encode a universal tautology, i.e., for every literal $l \in \literals(\pobj^k)$, the negated literal $\overline{l}$ is not contained in $\literals(\pobj^k)$.

\paragraph{Graph representation.}
A proof in the $\redres$ calculus can be represented as a directed acyclic graph~(DAG).
The nodes in the DAG are proof objects~$\pobj^k$ and the edges represent applications of ($\resrule$) and ($\redrule$).
The rule ($\resrule$) is represented by a hyper-edge that is labeled with the propositional resolution proof $\pi$.
Edges representing the universal reduction can thus remain unlabeled without introducing ambiguity.
The \emph{size} of a $\redres$ proof is the number of nodes in the graph together with the number of inner (non-leaf, non-root) nodes of the containing propositional resolution proofs.

A \emph{refutation} in the $\redres$ calculus is a proof that derives a proof object $\pobj^0$ at level $0$.
A proof for some $\pobj^k$ is a $\redres$ proof with root $\pobj^k$.
Thus, a proof for $\pobj^k$ can be also viewed as a refutation for the formula $\quant X_{k+1} \dots \quant X_n \ldot\allowbreak \bigwedge_{i \in \pobj} \literals(i,>k)$ starting with quantifier level $k+1$ and containing clauses represented by~$\pobj$.

\begin{example} \label{ex:example_redres_proof}
  Consider the following QBF
  \begin{equation} \label{eq:example_qbf}
    \underbrace{\exists e_1}_1 \ldot
    \underbrace{\forall u_1}_2 \ldot
    \underbrace{\exists c_1, c_2}_3 \ldot
    ( \underbrace{\overline{e}_1 \lor c_1}_{C_1} )
    ( \underbrace{\overline{u}_1 \lor c_1}_{C_2} )
    ( \underbrace{e_1 \lor c_2}_{C_3} )
    ( \underbrace{u_1 \lor c_2}_{C_4} )
    ( \underbrace{\overline{c}_1 \lor \overline{c}_2}_{C_5} ) \enspace.
  \end{equation}
  
  The refutation in the $\redres$ calculus is given in Fig.~\ref{fig:example_redres}.
  In the nodes, we represent the proof objects $\pobj^k$ in the first component and the represented clause in the second component.
  The proof follows the structure of the quantifier prefix, i.e., it needs four levels to derive a refutation.
  The resolution proof~$\pi_1$ for propositional formula
  \begin{equation*}
    \literals(\set{1}^3) \land
    \literals(\set{4}^3) \land
    \literals(\set{5}^3)
    {} \equiv (c_1)(c_2)(\overline{c}_1 \lor \overline{c}_2)  
  \end{equation*}
  is depicted in Fig.~\ref{fig:resolution}\subref{fig:resolution-example}.
  
  \begin{figure}[t]
    \centering
    \begin{tikzpicture}[>=stealth',shorten >=1pt,auto,thick,scale=0.7,transform shape]
  \tikzstyle{proofobject}=[draw,rectangle,rounded corners,inner sep=5pt,align=center,font=\Large]
  \tikzstyle{hyper}=[draw,rectangle,inner sep=5pt,align=center,font=\large]
  
  \node[proofobject] (root) {$(\set{1,2,3,4,5}^0,\bot)$};
  
  \node[above=of root,hyper] (hyper1) {$\pi_3$};
  \node[above left=of hyper1,proofobject] (c1_1) {$(\set{1,4,5}^1,\overline{e}_1)$};
  \node[above right=of hyper1,proofobject] (c3_1) {$(\set{2,3,5}^1,e_1)$};
  
  \node[above=of c1_1,proofobject] (c14_2) {$(\set{1,4,5}^2,u_1)$};
  \node[above=of c3_1,proofobject] (c23_2) {$(\set{2,3,5}^2,\overline{u}_1)$};
  
  \node[above=of c14_2,hyper] (hyper2) {$\pi_1$};
  \node[above=of c23_2,hyper] (hyper3) {$\pi_2$};
  
  \node[above left=of hyper2,proofobject] (c1_3) {$(\set{1}^3,c_1)$};
  \node[above=of hyper2,proofobject] (c4_3) {$(\set{4}^3,c_2)$};
  \node[above right=of hyper2,proofobject] (c5_3) {$(\set{5}^3,\overline{c}_1 \lor \overline{c}_2)$};
  
  \node[above right=of hyper3,proofobject] (c3_3) {$(\set{3}^3,c_2)$};
  \node[above=of hyper3,proofobject] (c2_3) {$(\set{2}^3,c_1)$};
  
  \draw (c1_3) edge (hyper2)
        (c4_3) edge (hyper2)
        (c5_3) edge (hyper2)
        
        (c5_3) edge (hyper3)
        (c2_3) edge (hyper3)
        (c3_3) edge (hyper3)
        
        (hyper2) edge (c14_2)
        (hyper3) edge (c23_2)
        
        (c14_2) edge (c1_1)
        (c23_2) edge (c3_1)
        
        (c1_1) edge (hyper1)
        (c3_1) edge (hyper1)
        
        (hyper1) edge (root)
        ;
\end{tikzpicture}
    \caption{A $\redres$ refutation for formula~\eqref{eq:example_qbf}.}
    \label{fig:example_redres}
  \end{figure}
\end{example}

In the following, we give a formal correctness argument and compare our calculus to established proof systems.
A QBF proof system is \emph{sound} if deriving a proof implies that the QBF is false and it is \emph{refutational complete} if every false QBF has a proof.

\begin{theorem} \label{thm:redres-soundness}
  $\redres$ is sound and refutational complete for QBF.
\end{theorem}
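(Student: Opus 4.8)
The plan is to establish both halves of the statement through the semantic reading of a proof object that the text already anticipates: a proof with root $\pobj^k$ should witness falsity of the QBF $\Phi_\pobj^k := \quant X_{k+1} \dots \quant X_n \ldot \bigwedge_{i \in \pobj} \literals(i,>k)$, i.e.\ the original matrix restricted to the clauses indexed by $\pobj$ and to the literals strictly below level~$k$. Soundness is the forward direction of this correspondence and refutational completeness the backward direction; I would treat soundness by induction on the structure of the derivation and completeness by downward induction on the level~$k$. Both directions hinge on one elementary observation: if an assignment $\alpha$ of the level-$k$ variables $X_k$ falsifies every literal of $\literals(\pobj^k) = \bigcup_{i \in \pobj}\literals(i,k)$, then every clause $i \in \pobj$ reduces under $\alpha$ to $\literals(i,>k)$, so that $\Phi_\pobj^{k-1}[\alpha]$ collapses exactly to $\Phi_\pobj^k$.

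For soundness I would prove by induction on the derivation that a proof with root $\pobj^k$ implies $\Phi_\pobj^k$ is false. The axiom $\set{i}^k$ has $\literals(i,>k)=\emptyset$, so its matrix is the empty clause and $\Phi_{\set{i}}^k$ is false. For $\redrule$ (level $k$ universal) the side condition forbids a complementary pair in $\literals(\pobj^k)$, hence some assignment $\alpha$ of $X_k$ falsifies all of its literals; by the observation $\Phi_\pobj^{k-1}[\alpha] \equiv \Phi_\pobj^k$, which is false by the induction hypothesis, so the $\forall$-quantified $\Phi_\pobj^{k-1}$ is false. For $\resrule$ (level $k$ existential) the attached refutation $\pi$ certifies that $\bigwedge_i \literals(\pobj_i^k)$ is unsatisfiable, so every $\alpha$ falsifies some premise clause $\literals(\pobj_i^k)$; then $\Phi_{\bigcup_i\pobj_i}^{k-1}[\alpha]$ contains the conjuncts of $\Phi_{\pobj_i}^k$, false by induction, hence the $\exists$-quantified conclusion is false. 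Taking $k=0$ and noting that $\Phi_\pobj^0$ is the matrix over a subset of the clauses of $\Phi$ gives falsity of $\Phi$.

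For completeness I would prove the converse by downward induction on $k$ from $n$ to $0$: if $\Phi_S^k$ is false for a set $S$ of clause indices, then some $\pobj^k$ with $\pobj\subseteq S$ is derivable. At $k=n$ falsity forces $S\neq\emptyset$, and any singleton leaf $\set{i}^n$ works. In the step from $k$ to $k-1$, write $S_\alpha = \{\,i\in S : \literals(i,k)[\alpha]=\bfalse\,\}$ for an assignment $\alpha$ of $X_k$; the observation gives $\Phi_S^{k-1}[\alpha] \equiv \Phi_{S_\alpha}^k$. If $Q_k=\forall$, pick a falsifying $\alpha$, obtain by induction a derivable $\pobj^k$ with $\pobj\subseteq S_\alpha$, note that all literals of $\literals(\pobj^k)$ are false under $\alpha$ so no complementary pair occurs, and apply $\redrule$. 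If $Q_k=\exists$, then for every $\alpha$ the formula $\Phi_{S_\alpha}^k$ is false, so induction yields a derivable $\pobj_\alpha^k$ with $\pobj_\alpha\subseteq S_\alpha$ whose clause $\literals(\pobj_\alpha^k)$ is falsified by $\alpha$; the finite family $\{\literals(\pobj_\alpha^k)\}_\alpha$ is therefore an unsatisfiable CNF over $X_k$, and by refutational completeness of propositional resolution it admits a refutation $\pi$, which is exactly what $\resrule$ needs to combine the $\pobj_\alpha^k$ into $(\bigcup_\alpha\pobj_\alpha)^{k-1}$ with $\bigcup_\alpha\pobj_\alpha\subseteq S$. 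Instantiating at $k=0$ with $S=\set{1,\dots,m}$ and $\Phi_S^0=\Phi$ yields the refutation.

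I expect the existential case of completeness to be the main obstacle. The delicate points are (i) arguing that the level-$k$ clauses produced for different $X_k$-assignments really do form an unsatisfiable propositional formula, which relies on maintaining the invariant $\pobj_\alpha\subseteq S_\alpha$ so that each $\literals(\pobj_\alpha^k)$ is guaranteed to be falsified by its own $\alpha$, and (ii) bridging the gap between the clause-index bookkeeping of proof objects and the purely literal-level reasoning of the propositional refutation $\pi$. The remaining cases are routine once the semantic invariant and the single reduction observation are in place.
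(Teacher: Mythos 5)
Your argument is correct, and it takes a genuinely different route from the paper's. The paper splits the two directions along two different invariants: completeness is proved by induction on the quantifier prefix from the \emph{outside in} --- instantiate the outermost block, obtain proofs of each $\Phi[\assignment_X]$ from the induction hypothesis, then re-embed those proofs by incrementing every level --- while soundness maintains the invariant that $\Phi$ and $\Phi \land (\bigvee_{i \in \pobj} \bigvee_{l \in \literals(i, \leq k)} l)$ are equisatisfiable for every derived $\pobj^k$. You instead work throughout with the single semantic reading that a proof with root $\pobj^k$ witnesses falsity of $\quant X_{k+1} \dots \quant X_n \ldot \bigwedge_{i \in \pobj} \literals(i,>k)$ (a reading the paper states as a remark after the calculus but never uses as its proof invariant), proving soundness by induction on the derivation and completeness by \emph{downward} induction on the level with the strengthened statement over arbitrary clause subsets $S$. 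This buys you two things the paper glosses over: first, you never need to relate proof objects of the instantiated formula $\Phi[\assignment_X]$ back to clause indices of the original matrix --- the paper's level-shifting embedding silently assumes this re-indexing is unproblematic, whereas your construction stays in the original indexing; second, the existence of the propositional refutation $\pi$ in the existential step is immediate from your explicit invariant $\pobj_\alpha \subseteq S_\alpha$, which guarantees that each clause $\literals(\pobj_\alpha^k)$ is falsified by its own assignment $\alpha$, where the paper only asserts ``by construction \dots\ assuming otherwise leads to a contradiction.'' The price is the bookkeeping with $S$ and $S_\alpha$ (and the implicit appeal, shared with the paper, to monotonicity of falsity under adding clauses), but the two arguments have comparable length and yours is the tighter of the two.
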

\begin{proof}
  The completeness proof is carried out by induction over the quantifier prefix.
  
  \emph{Induction base.}
  Let $\exists X \ldot \varphi$ be a false QBF and $\varphi$ be propositional.
  Then $(\resrule)$ derives some $\pobj^0$ because resolution is complete for propositional formulas.
  Let $\forall X \ldot \varphi$ be a false QBF and $\varphi$ be propositional.
  Picking an arbitrary (non-tautological) clause $C_i$ and applying $(\redrule)$ leads to $\set{i}^0$.
  
  \emph{Induction step.}
  Let $\exists X \ldot \Phi$ be a false QBF, i.e., for all assignments $\assignment_X$ the QBF $\Phi[\assignment_X]$ is false.
  Hence, by induction hypothesis, there exists a $\redres$ proof for every $\Phi[\assignment_X]$.
  We transform those proofs in a way that they can be used to build a proof for $\Phi$.
  Let $P$ be a proof of $\Phi[\assignment_X]$.
  $P$ has a distinct root node (representing the empty set), that was derived using $(\redrule)$ as $\Phi[\assignment_X]$ starts with a universal quantifier.
  To embed $P$ in $\Phi$, we increment every level in $P$ by one, as $\Phi$ has one additional (existential) quantifier level.
  Then, instead of deriving the empty set, the former root node derives a proof object of the form $\pobj^1$.
  Let $N$ be the set of those former root nodes.
  By construction, there exists a resolution proof $\pi$ such that the empty set can be derived by $(\resrule)$ using $N$ (or a subset thereof).
  Assuming otherwise leads to the contradiction that some $\Phi[\assignment_X]$ is true.
  
  Let $\forall X \ldot \Phi$ be a false QBF, i.e., there is an assignment $\assignment_X$ such that the QBF $\Phi[\assignment_X]$ is false.
  Hence, by induction hypothesis, there exists a $\redres$ proof for $\Phi[\assignment_X]$.
  Applying $(\redrule)$ using $\assignment_X$ is a $\redres$ proof for $\Phi$.
  
  For soundness it is enough to show that one cannot derive a clause using this calculus that changes the satisfiability.
  Let $\Phi = \quant X_1 \dots \quant X_n \ldot \bigwedge_{1 \leq i \leq m} C_i$ be an arbitrary QBF.
  For every level $k$ and every $\pobj^k$ generated by the application of the $\redres$ calculus, it holds that $\Phi$ and $\quant X_1 \dots \quant X_n \ldot \bigwedge_{1 \leq i \leq m} C_i \land (\bigvee_{i \in \pobj} \bigvee_{l \in \literals(i, \leq k)} l)$ are equisatisfiable.
  Assume otherwise, then either $(\redrule)$ or $(\resrule)$ have derived a $\pobj^k$ that would make $\Phi$ false.
  Again, by induction, one can show that if $(\redrule)$ derived a $\pobj^k$ that makes $\Phi$ false, the original premise $\pobj^{k+1}$ would have made $\Phi$ false; likewise, if $(\resrule)$ derived a $\pobj^k$ that makes $\Phi$ false, the conjunction of the premises have made $\Phi$ false.
  \qed
\end{proof}

\paragraph{Comparison to $\qres$ calculus.}
$\qres$~\cite{journals/iandc/BuningKF95} is an extension of the (propositional) resolution rule to handle universal quantification.
The universal reduction rule allows the removal of universal literal $u$ from a clause $C$ if no existential literal $l \in C$ depends on $u$.
There are also additional rules on when the resolution rule can be applied, i.e., it is not allowed to produce tautology clauses using the resolution rule.
The definitions of $Q$-resolution proof and refutation are analogous to the propositional case.

There are two restricted classes of $\qres$ that are commonly considered, that is \emph{level-ordered} and \emph{tree-like} $\qres$.
A $\qres$ proof is level-ordered if resolution of an existential literal $l$ at level $k$ happens before every other existential literal with level $< k$.
A $\qres$ proof is tree-like if the graph representing the proof has a tree shape.

As a first result, we show that $\redres$ is polynomially equivalent to level-ordered $\qres$, i.e., a proof in our calculus can be polynomially simulated in level-ordered $\qres$ and vice versa.
While this is straightforward from the definitions of both calculi, this is much less obvious if one looks at the underlying algorithms of the CEGAR approaches~\cite{conf/ijcai/JanotaM15,conf/fmcad/RabeT15} and QCDCL~\cite{conf/iccad/ZhangM02}.

\begin{theorem} \label{thm:redres-level-ordered-qres-equivalent}
  $\redres$ and level-ordered $\qres$ are $p$-sim. equivalent.
\end{theorem}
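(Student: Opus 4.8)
The plan is to exhibit a direct, step-for-step translation in both directions, exploiting that both calculi process the quantifier prefix from the innermost to the outermost block, using resolution on existential blocks and universal reduction on universal blocks. The bridge between the two is a clause correspondence: to a proof object $\pobj^k$ I associate the $\qres$ clause $C(\pobj^k) := \bigcup_{i \in \pobj} \literals(i, \leq k)$, i.e.\ all literals of level at most $k$ of the clauses indexed by $\pobj$. Under this map a leaf $\set{i}^z$ (with $z$ the maximal level of $C_i$) corresponds to the axiom clause $C_i$ itself, and the root $\pobj^0$ corresponds to the empty clause, exactly as required for a refutation.

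For the direction that level-ordered $\qres$ $p$-simulates $\redres$, I would translate a $\redres$ refutation node by node. A leaf is replaced by the axiom $C_i$. An application of $(\redrule)$ at a universal level $k$, passing from $\pobj^k$ to $\pobj^{k-1}$, is replaced by the sequence of $\qres$ universal-reduction steps that delete the level-$k$ universal literals of $C(\pobj^k)$; this is legal because after processing all blocks above $k$ the clause contains no existential literal of level $>k$, and the premise of $(\redrule)$ guarantees no complementary pair is present. An application of $(\resrule)$ at an existential level $k$ with embedded propositional proof $\pi$ is replaced by simulating each step of $\pi$ by a $\qres$ resolution on the corresponding level-$k$ variable, while the lower-level literals ride along unchanged; the conclusion is precisely $C(\pobj^{k-1})$. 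The resulting $\qres$ proof is level-ordered, since all resolutions on level-$k$ variables stem from the single existential block $k$ and therefore precede every resolution of a lower level, and its size is linear in the $\redres$ size because the inner nodes of each $\pi$ are already counted there.

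For the converse direction, I would start from a level-ordered $\qres$ refutation and first normalize it so that all universal reductions and existential resolutions are grouped strictly by level, innermost first (level-ordering already fixes the order of existential resolutions, so this only reorders the interspersed reductions). I then tag every clause of the proof with the union of the axiom indices of its ancestors, which supplies the index set $\pobj$ of the corresponding proof object. The block of resolution steps acting on the level-$k$ existential variables, read off on their level-$k$ literals, is exactly a propositional resolution proof $\pi$ over the clauses $\literals(\pobj_i^k)$ and can be packaged into a single $(\resrule)$ application; each maximal run of universal reductions at level $k$ becomes one $(\redrule)$ application. This produces a $\redres$ refutation whose size is polynomial in the $\qres$ size.

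I expect the main obstacle to be reconciling the $\qres$ prohibition on tautological clauses with $\redres$, which enforces the no-tautology condition only on the literals of the \emph{current} level. When the embedded proof $\pi$ is lifted into $\qres$, the lower-level literals carried along could in principle form a complementary pair, yielding a clause that $\qres$ forbids. I would address this by noting that such a clause is tautological, hence subsumed and useless for the refutation, so the offending resolution step can be short-circuited (one of its two parents already suffices) without enlarging the proof beyond a polynomial factor, while level-ordering keeps the level-$k$ part of the derivation faithful to $\pi$. The remaining work---the index-set bookkeeping and the reduction/resolution reordering---is routine but must be carried out with care to keep the correspondence exact.
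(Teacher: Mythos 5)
Your proposal is correct and takes essentially the same approach as the paper: the same clause correspondence $\pobj^k \mapsto \bigcup_{i \in \pobj} \literals(i, \leq k)$, the same node-by-node lifting of the embedded propositional proofs $\pi$ into $\qres$ resolution steps for one direction, and the same leaves-to-root tracking of axiom indices to assemble proof objects for the converse. In fact your write-up is more careful than the paper's two-sentence argument, notably in flagging (and correctly handling, via the standard subsumption-based tautology-elimination idea) the clash between $\qres$'s no-tautology restriction and $\redres$'s per-level tautology check, a subtlety the paper's proof silently glosses over.
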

\begin{proof}
  A $\redres$ proof can be transformed into a $Q$-resolution proof by replacing every node $\pobj^k$ by the clause $(\bigvee_{i \in \pobj} \bigvee_{l \in \literals(i, \leq k)} l)$ and by replacing the hyper-edge labeled with $\pi$ by a graph representing the applications of the resolution rule.
  Similarly, a level-ordered $Q$-resolution proof can be transformed into a $\redres$ proof by a step-wise transformation from leaves to the root.
  This way, one can track the clauses needed for constructing the proof objects $\pobj^k$ at every level $k$.
  \qed
\end{proof}

Despite being equally powerful, the differences are important and enable the expansion based extension that we will introduce in the next section.
One difference is that our calculus only reasons about literals of one quantifier level, which allows us to use plain resolution without any changes (as are needed in $\qres$).
Further, the proof rules capture the fact that only proof obligations are communicated between the quantifier levels of the QBF.
An immediate consequence is that every refutation in the proof system is DAG-like and has exactly depth $k+1$.

Since the level-ordering constraint imposes an order on the resolution, the size of the refutation proof may be exponentially larger for some formulas~\cite{journals/amai/Goerdt92}.
Hence, also $\redres$ is in general exponentially weaker than unrestricted $\qres$.
In practice, and already noted by Janota and Marques-Silva~\cite{journals/tcs/JanotaM15}, solvers that are based on $\qres$ proofs produce level-ordered $\qres$.

In the initial version of $\caqe$~\cite{conf/fmcad/RabeT15} an optimization that can generate new resolvents at level $k$ without recursion into deeper levels was described.
We model this optimization as a new rule extending the $\redres$ calculus and show that this rule leads to an exponential separation.

\paragraph{Strong UNSAT Rule.}

In the implementation of $\caqe$, we used an optimization which we called \emph{strong UNSAT refinement}~\cite{conf/fmcad/RabeT15}, that allowed the solver to strengthen a certain type of refinements.
The basic idea behind this optimization is that if the solver determines that, at an existential level $k$, a certain set of clauses $\mathcal{C}$ cannot be satisfied at the same time, then every alternative set of clauses $\mathcal{C}'$, that is equivalent with respect to the literals in levels $>k$, cannot be satisfied as well.
We introduce the following proof rule that formalizes this intuition.
We extend proof objects $\pobj^k$ such that they can additionally contain fresh literals, i.e., literals that were not part of the original QBF.
Those literals are treated as they were bound at level $k$, i.e., they are contained in $\literals(\pobj^k)$ and can thus be used in the premise of the rule $(\resrule)$, but are not contained in the conclusion~$\pobj^{k-1}$.
\begin{center}
\begin{minipage}{0.7\textwidth}
  $\infer[\mathrm{strengthen}]
    {(\set{a} \cup \pobj)^k \quad \set{\overline{a}, j_1}^k \quad\cdots\quad \set{\overline{a}, j_n}^k}
    {(\pobj \cup \set{ i })^k}$
\end{minipage}%
\begin{minipage}{0.3\textwidth}
    $Q_k = \exists$,\\
    $\literals(j,>k) \subseteq \literals(i,>k)$ for all $j \in \set{j_1,\dots,j_n}$,\\
    $a$ fresh var.
\end{minipage}%
\end{center}

\begin{theorem}
  The strengthening rule is sound.
\end{theorem}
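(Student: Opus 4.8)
The plan is to reuse the equisatisfiability invariant from the proof of Theorem~\ref{thm:redres-soundness}, extended so that a proof object may carry the fresh variable $a$, which we read as a fresh existential variable of level $k$. Since $a$ is shared by all three conclusions but is eliminated when passing to level $k-1$, the conclusions cannot be justified separately: taken in isolation each is trivially addable (choose the value of $a$ that makes its clause a tautology), so the content of the rule is their \emph{joint} addability. Writing $D = \bigvee_{i'\in\pobj}\bigvee_{l\in\literals(i',\leq k)} l$ and $L_j = \bigvee_{l\in\literals(j,\leq k)} l$, the premise contributes the clause $D\vee L_i$ and the conclusions contribute $a\vee D$ together with $\overline a\vee L_{j_1},\dots,\overline a\vee L_{j_n}$; the goal is that conjoining the latter to the matrix of $\Phi$ preserves its truth value. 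A one-line case split on $a$ removes the selector,
\[
  \exists a \ldot \big[(a\vee D)\wedge \textstyle\bigwedge_{\ell}(\overline a\vee L_{j_\ell})\big] \;\equiv\; D \vee \textstyle\bigwedge_{\ell} L_{j_\ell},
\]
so it suffices to show that adding the consequence $D\vee\bigwedge_\ell L_{j_\ell}$ does not change whether $\Phi$ is true.

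The key step transfers the refutation meaning of the premise from $i$ to each $j_\ell$. By the refutation reading of proof objects recalled above, the premise $(\pobj\cup\set i)^k$ certifies that $\quant X_{k+1}\dots\quant X_n\ldot \bigwedge_{i'\in\pobj}\literals(i',>k)\wedge\literals(i,>k)$ is false. The side condition $\literals(j_\ell,>k)\subseteq\literals(i,>k)$ makes the tail of $j_\ell$ a subclause of the tail of $i$, hence $\literals(j_\ell,>k)$ implies $\literals(i,>k)$; replacing the $i$-conjunct by the logically stronger $j_\ell$-conjunct only shrinks the model set, so $\quant X_{k+1}\dots\quant X_n\ldot \bigwedge_{i'\in\pobj}\literals(i',>k)\wedge\literals(j_\ell,>k)$ is again false, for every $\ell$.

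I would then close the equisatisfiability in the assume-otherwise style of Theorem~\ref{thm:redres-soundness}. One direction is immediate, as dropping conjuncts only relaxes the matrix. For the other, assume $\Phi$ is true yet the added clauses make it false; then for a winning strategy of $\Phi$ there is a play whose assignment to the variables of level $\leq k$ falsifies $D$ and some $L_{j_{\ell_0}}$. Under that partial assignment every clause of $\pobj\cup\set{j_{\ell_0}}$ loses all its level-$\leq k$ literals and reduces to its tail, so from level $k+1$ the residual obligation is exactly $\bigwedge_{i'\in\pobj}\literals(i',>k)\wedge\literals(j_{\ell_0},>k)$, which is false by the previous paragraph. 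Letting the universal player follow the corresponding refuting strategy then falsifies the matrix of $\Phi$ along this play, contradicting the strategy being winning; hence no such play exists, the strategy may set $a$ to be false exactly when $D$ is satisfied, and equisatisfiability follows.

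The main obstacle I anticipate is not the logical core---monotonicity of the tail implication under the subclause condition is a one-liner---but the bookkeeping around the fresh variable: one must state precisely how $a$ enters $\literals(\pobj^k)$ at level $k$ while vanishing at level $k-1$, phrase the invariant \emph{collectively} (the per-object statement being vacuous), and match the QBF-semantic reading of ``adding clauses preserves the truth value'' with the strategy reading of the refutation property, so that the stitching of the existential winning strategy with the universal refuting strategy in the last step is legitimate.
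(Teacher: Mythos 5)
Your proof is correct, but it takes a genuinely different route from the paper's. Both arguments pivot on the same key observation: since $\literals(j,>k)\subseteq\literals(i,>k)$, the tail of $C_j$ implies the tail of $C_i$, so falsity of $\quant X_{k+1}\dots\quant X_n\ldot\bigwedge_{i^*\in\pobj}\literals(i^*,>k)\land\literals(i,>k)$ transfers to the variant with $\literals(j,>k)$ in place of $\literals(i,>k)$ (the paper phrases this as adding the clause and then deleting the subsumed one). The difference lies in how this is turned into soundness. The paper argues proof-theoretically: resolving the conclusions on the fresh variable $a$ at level $k$ yields exactly the objects $(\pobj\cup\set{j_\ell})^k$, and each of these has a \emph{genuine} $\redres$ proof by the transferred falsity together with completeness (Theorem~\ref{thm:redres-soundness}); the strengthening rule is thus conservative---nothing becomes derivable that was not already derivable---and soundness is inherited from the base calculus. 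You instead work semantically: you extend the equisatisfiability invariant from the soundness part of Theorem~\ref{thm:redres-soundness} to cover the conclusions \emph{collectively}, project out $a$ to obtain the single added clause $D\vee\bigwedge_\ell L_{j_\ell}$, and establish the invariant by stitching an existential winning strategy for $\Phi$ against a universal refuting strategy for the residual QBF. Your route is more self-contained (it never invokes completeness, only the semantic reading of proof objects) and it makes explicit two points the paper leaves implicit: that the conclusions are only jointly meaningful because they share $a$, and how $a$ is assigned when extending a winning strategy. The paper's route is shorter and yields a slightly stronger moral---admissibility of the rule, which also squares with Theorem~\ref{thm:separation-strengthen-rule}: the simulating proofs exist but may be exponentially larger---at the cost of implicitly assuming that any level-$k$ resolution proof using the $a$-clauses can be rearranged to resolve on $a$ first, a restructuring step your argument avoids entirely.
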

\begin{proof}
  In a resolution proof at level $k$, one can derive the proof objects $(\pobj \cup \set{j})^k$ for $j \in \set{j_1,\dots,j_n}$ using the conclusion of the strengthening rule.
  Assume we have a proof for $(\pobj \cup \set{i})^k$ (premise), then the quantified formula $\forall X_{k+1} \dots \quant X_n \ldot \bigwedge_{i^* \in \pobj} \literals(i^*,>k) \land \literals(i,>k)$ is false.
  Thus, the QBF with the same quantifier prefix and matrix, extended by some clause $\literals(j,>k)$ for $j \in \set{j_1,\dots,j_n}$, is still false.
  Since every $C_j$ subsumes $C_i$ with respect to quantifier level greater than $k$ ($\literals(j,>k) \subseteq \literals(i,>k)$), the clause $\literals(i,>k)$ can be eliminated without changing satisfiability.
  Thus, the resulting quantified formula $\forall X_{k+1} \dots \quant X_n \ldot \bigwedge_{i^* \in \pobj} \literals(i^*,>k) \land \literals(j,>k)$ is false and there exists a $\redres$ proof for $(\pobj \cup \set{j})^k$.
  \qed
\end{proof}

\begin{theorem} \label{thm:separation-strengthen-rule}
  The proof system without strengthening rule does not p-simulate the proof system with strengthening rule.
\end{theorem}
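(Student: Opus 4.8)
The plan is to separate the two systems by a family of false QBFs $\dagn$ for which $(\mathrm{strengthen})$ yields refutations of size polynomial in $n$, whereas every refutation in the plain $\redres$ calculus has size $2^{\Omega(n)}$. The leverage of the new rule is that, from a single premise $(\pobj\cup\set{i})^k$, it licenses the proof objects $\set{\overline a,j_\ell}^k$ for all clauses whose tail above level $k$ is subsumed by that of $C_i$ (the side condition $\literals(j,>k)\subseteq\literals(i,>k)$). In proof-theoretic terms this makes available, at level $k$, \emph{strong} clauses---the shorter tails $\literals(j,>k)$---that are not themselves level-$k$ resolvents of the original matrix. My separation will turn on the fact that these strong clauses shortcut the refutation of the inner existential block, while level-ordered resolution, restricted to the original (weaker) clauses, cannot obtain them without paying an exponential price.

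For the upper bound I would lay out $\dagn$ with an inner existential block whose clauses, taken as they appear in the matrix, encode a CNF that is hard for resolution, but which becomes trivially refutable once the subsuming tails delivered by $(\mathrm{strengthen})$ are added. A polynomial refutation then uses a few applications of the strengthening rule to install these strong clauses, after which $(\resrule)$ closes each existential level with a short resolution proof $\pi$ and $(\redrule)$ discharges the universal blocks. The points to check are that the subsumption side condition matches exactly the clauses I need, that the fresh variables $a$ are resolved away cleanly inside each $\pi$, and that the total number of nodes---including the inner resolution proofs---stays polynomial.

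For the lower bound I would invoke Theorem~\ref{thm:redres-level-ordered-qres-equivalent} to replace ``$\redres$ without strengthening'' by level-ordered $\qres$ at the cost of only a polynomial factor, so it suffices to bound the latter from below on $\dagn$. The level-ordering is what I would exploit: since an existential literal of level $k$ must be eliminated before any existential literal of smaller level, the proof is forced to refute the inner block using only clauses built from the original tails, and I would show by a resolution-width or bottleneck argument that this forces $2^{\Omega(n)}$ distinct clauses. \textbf{The main obstacle} is exactly this step---proving that the levelized discipline really does deny level-ordered $\qres$ access to the strong subsuming clauses, so that no DAG-like sharing available within $\qres$ can recover a subexponential proof. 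The upper bound, once $\dagn$ is fixed, is a routine construction by comparison.
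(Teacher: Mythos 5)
Your high-level skeleton (a polynomial upper bound via the strengthening rule, and an exponential lower bound obtained by passing to level-ordered $\qres$ through Theorem~\ref{thm:redres-level-ordered-qres-equivalent}) matches the paper's, but the proposal has two genuine gaps, and the first is fatal as stated. The family you name, $\dagn$, cannot separate the two systems: $\dagn$ is precisely the formula from \cite{journals/tcs/JanotaM15} that has a \emph{polynomial} level-ordered $\qres$ refutation (it is hard for $\expres$, not for level-ordered $\qres$), so by Theorem~\ref{thm:redres-level-ordered-qres-equivalent} plain $\redres$ already refutes it polynomially, and no lower bound of the kind you want can exist. If instead you meant $\dagn$ merely as a placeholder for a family still to be constructed, then the proposal is missing its entire content: neither the matrix, nor the polynomial strengthening-based refutation, nor the lower bound is actually given.

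The second gap is the step you yourself flag as the main obstacle: proving an exponential lower bound for level-ordered $\qres$ by a width or bottleneck argument. The way to dissolve this obstacle is to choose a formula for which that lower bound is already known. The paper uses the completion-principle family $\crn$, for which Janota and Marques-Silva \cite{journals/tcs/JanotaM15} showed that level-ordered $\qres$ requires exponential refutations (it is the formula witnessing that level-ordered $\qres$ does not $p$-simulate $\expres$); combined with Theorem~\ref{thm:redres-level-ordered-qres-equivalent}, the lower bound becomes a citation rather than a new argument. What remains is the upper bound, and $\crn$ is tailor-made for the strengthening rule: all clauses $C_{\overline{ij}}$ in a fixed column $j$ have the identical tail $\overline{z} \lor b_j$ with respect to the inner quantifier block, so the side condition $\literals(j,>k) \subseteq \literals(i,>k)$ holds with equality, and $n$ applications of the rule introduce fresh variables $c_1,\dots,c_n$ acting as names for ``some clause of column $j$''; resolving these against the proof objects obtained from the branches $z \to 0$ and $z \to 1$ yields the empty proof object with polynomially many nodes. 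Note finally that your intended mechanism for the upper bound (an inner CNF that is hard for propositional resolution but becomes easy once subsuming tails are added) is not how the separation works: in $\crn$ every individual inner resolution proof is short; what is exponential for plain $\redres$ is the \emph{number} of distinct level-$1$ proof objects that must be derived, and the strengthening rule collapses that count, not the difficulty of any single resolution step.
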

\begin{proof}
  We use the family of formulas $\crn$ that was used to show that level-ordered $\qres$ cannot $p$-simulate $\expres$~\cite{journals/tcs/JanotaM15}.
  We show that $\crn$ has a polynomial refutation in the $\redres$ calculus with strengthening rule, but has only exponential refutations without it.
  The latter follows from Theorem~\ref{thm:redres-level-ordered-qres-equivalent} and the results by Janota and Marques-Silva~\cite{journals/tcs/JanotaM15}.
  
  The formula $\crn$ has the quantifier prefix $\exists x_{11} \dots x_{nn} \forall z \exists a_1 \dots a_n b_1 \dots b_n$ and the matrix is given by
  \begin{equation} \label{eq:matrix_crn}
    \left( \bigvee_{i \in 1..n} \overline{a}_i \right) \land
    \left( \bigvee_{i \in 1..n} \overline{b}_i \right) \land
    \bigwedge_{i,j \in 1..n} \underbrace{(x_{ij} \lor z \lor a_i)}_{C_{ij}} \land \underbrace{(\overline{x}_{ij} \lor \overline{z} \lor b_j)}_{C_{\overline{ij}}}
  \end{equation}
  
  One can interpret the constraints as selecting rows and columns in a matrix where $i$ selects the row and $j$ selects the column, e.g., for $n=3$ it can be visualized as follows:
  \begin{equation*}
    \begin{array}{|l|l|l|l|l|l|}\hline
      x_{11} \lor z \lor a_1 &
      \overline{x}_{11} \lor \overline{z} \lor b_1 &
      x_{12} \lor z \lor a_1 &
      \overline{x}_{12} \lor \overline{z} \lor b_2 &
      x_{13} \lor z \lor a_1 &
      \overline{x}_{13} \lor \overline{z} \lor b_3 \\ \hline
      x_{21} \lor z \lor a_2 &
      \overline{x}_{21} \lor \overline{z} \lor b_1 &
      x_{22} \lor z \lor a_2 &
      \overline{x}_{22} \lor \overline{z} \lor b_2 &
      x_{23} \lor z \lor a_2 &
      \overline{x}_{23} \lor \overline{z} \lor b_3 \\ \hline
      x_{31} \lor z \lor a_3 &
      \overline{x}_{31} \lor \overline{z} \lor b_1 &
      x_{32} \lor z \lor a_3 &
      \overline{x}_{32} \lor \overline{z} \lor b_2 &
      x_{33} \lor z \lor a_3 &
      \overline{x}_{33} \lor \overline{z} \lor b_3 \\ \hline
    \end{array}
  \end{equation*}
  Assume $z \rightarrow 0$, then we derive the proof object $\pobj^1 = \set{i1 \mid i \in 1..n}^1$ ($\literals(\pobj^1) = \bigvee_{i \in 1..n} x_{i1}$) by applying the resolution and reduction rule.
  Likewise, for $z \rightarrow 1$, we derive the proof object $\pobj_0^1 = \set{ \overline{1j} \mid j \in 1..n }^1$ ($\literals(\pobj_0^1) = \bigvee_{j \in 1..n} \overline{x}_{1j}$).
  Applying the strengthening rule on $\pobj_0^1$ results in $\pobj_1^1 = (\set{c_1} \cup \set{ \overline{1j} \mid j \in 2..n})^1$ and $\set{\overline{c}_1, \overline{11}}^1, \set{\overline{c}_1, \overline{21}}^1, \dots, \set{\overline{c}_1, \overline{n1}}^1$ where $c_1$ is a fresh variable.
  Further $n-1$ applications of the strengthening rule starting on $\pobj_1^1$ lead to $\pobj_n^1 = \set{ c_j \mid j \in 1..n }^1$ and the proof objects $\set{\overline{c}_j, \overline{ij} \mid i,j \in 1..n}^1$,
  where $c_j$ are fresh variables, as all clauses in a column are equivalent with respect to the inner quantifiers (contain~$\overline{z} \lor b_j$).
  
  Using $\pobj^1$ and $\set{\overline{c}_1, \overline{11}}^1, \set{\overline{c}_1, \overline{21}}^1, \dots, \set{\overline{c}_1, \overline{n1}}^1$ from the first strengthening application, we derive the singleton set $\set{ \overline{c}_1 }$ using $n$ resolution steps ($\literals(\pobj^1) = \bigvee_{i \in 1..n} x_{i1}$ and $\literals(\set{\overline{c}_1, \overline{i1}}^1) = \set{\overline{c}_1, \overline{x}_{i1}}$).
  Analogously, one derives the singletons $\set{\overline{c}_2} \dots \set{\overline{c}_n}$ and together with $\pobj_n^1 = \set{ c_j \mid j \in 1..n }$ the empty set is derived.
  Thus, there exists a polyonomial resolution proof leading to a proof object $\pobj^0$ and the size of the overall proof is polynomial, too.
  \qed 
\end{proof}

We note that despite being stronger than plain $\redres$, the extended calculus is still incomparable to $\expres$.
\begin{corollary} \label{thm:strengthen-not-simulate-expres}
  $\redres$ with strengthening rule does not $p$-simulate $\expres$.
\end{corollary}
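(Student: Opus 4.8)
The plan is to exhibit a family of QBFs that admits polynomial-size $\expres$ refutations yet requires exponential-size refutations in $\redres$ with the strengthening rule. The obvious candidate, $\crn$, is no longer available: Theorem~\ref{thm:separation-strengthen-rule} shows precisely that strengthening collapses $\crn$ to polynomial size. Hence the entire content of the corollary lies in finding a family on which the strengthening rule buys nothing.

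First I would pin down exactly when strengthening helps. Its side condition $\literals(j,>k)\subseteq\literals(i,>k)$, together with the introduction of a fresh variable $a$, is only useful when the matrix contains several distinct clauses that become identical once projected onto the levels $>k$ --- these are the ``columns'' of $\crn$, the clauses $(\overline{x}_{ij}\lor\overline{z}\lor b_j)$ sharing the inner part $(\overline{z}\lor b_j)$. My plan is therefore to work with a family that is easy for $\expres$ and hard for level-ordered $\qres$, but in which, at every existential level, the clauses are pairwise inequivalent with respect to their inner literals, so that the subsumption condition can hold only trivially. Concretely, I would perturb $\crn$ by attaching a private inner literal to each clause of a column (replacing the shared $b_j$ by distinct $b_{ij}$, compensated by additional short clauses), chosen so that (i) the expansion of the single universal variable $z$ still yields a polynomial $\expres$ refutation, and (ii) the $\crn$-style counting lower bound for level-ordered $\qres$ survives.

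The heart of the argument, and the main obstacle, is a \emph{strengthening-elimination} lemma: on a family without inner-equivalent clauses, every application of the strengthening rule can be removed with at most polynomial overhead, so that any $\redres$-with-strengthening refutation can be converted into a plain $\redres$ refutation of comparable size. This is delicate because the fresh variables introduced by strengthening behave like extension variables, and one must argue that, absent inner-equivalent clauses, they cannot encode any useful new structure and hence cannot shortcut the resolution carried out at each level. Granting the lemma, a plain $\redres$ refutation is, by Theorem~\ref{thm:redres-level-ordered-qres-equivalent}, a level-ordered $\qres$ refutation, to which the exponential lower bound of (ii) applies, yielding the desired exponential separation.

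Finally I would assemble the pieces: the upper bound (i) gives polynomial $\expres$ proofs, while (ii) together with the elimination lemma gives an exponential lower bound in $\redres$ with strengthening, establishing that the latter cannot $p$-simulate $\expres$. The subtle balance to maintain throughout is that the perturbation of $\crn$ must simultaneously preserve the cheap expansion proof and the expensive level-ordered behaviour while destroying every nontrivial instance of the subsumption side condition; getting all three at once is where I expect the real work to lie.
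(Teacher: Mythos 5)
Your high-level strategy coincides with the paper's: modify $\crn$ so that the strengthening rule becomes useless while the $\expres$ upper bound and the level-ordered $\qres$ lower bound survive. But the paper's modification privatizes the \emph{universal} variable: $z$ is replaced by fresh variables $z_{ij}$, one per pair $i,j$, yielding the family $\crn'$. This single move discharges every obligation at once: (a) each clause $C_{ij}$, $C_{\overline{ij}}$ now carries a private universal literal in its projection onto the inner levels, so at the outer existential level the side condition $\literals(j,>k)\subseteq\literals(i,>k)$ holds only for $j=i$ and strengthening buys nothing --- in particular, no ``strengthening-elimination lemma'' is needed; (b) the formula remains false, because the universal player can always play all $z_{ij}$ equal and thereby mimic its winning strategy for $\crn$; (c) the $\expres$ refutation stays polynomial, because the expansion tree still needs only the two branches ``all $z_{ij}\to 0$'' and ``all $z_{ij}\to 1$''; (d) the exponential bound for level-ordered $\qres$, equivalently plain $\redres$ by Theorem~\ref{thm:redres-level-ordered-qres-equivalent}, carries over.

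Your perturbation privatizes the \emph{existential} variables instead ($b_j \to b_{ij}$), and this is where the gap lies: done naively, it destroys falsity of the formula. If $(\bigvee_{j} \overline{b}_j)$ is replaced by $(\bigvee_{i,j} \overline{b}_{ij})$, the resulting QBF is \emph{true}: let the existential player set row $1$ of the $x$-matrix to all ones and rows $2,\dots,n$ to all zeros; under $z=0$ only rows $\geq 2$ force their $a_i$, so $\overline{a}_1$ can be satisfied, and under $z=1$ only row $1$ forces its $b_{1j}$, so $\overline{b}_{21}$ can be satisfied. Hence the entire weight of your construction rests on the unspecified ``compensating short clauses'' (say, column equalities between the $b_{ij}$), and for those you would still have to re-establish falsity, re-prove the level-ordered lower bound, check that the new clauses introduce no fresh subset relations, and then prove the elimination lemma that you yourself flag as the heart of the argument but for which you give no proof. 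With the paper's choice of perturbing $z$, all of these obligations vanish simultaneously; with your choice, they constitute the entire --- and missing --- mathematical content.
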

\begin{proof}
  We use a modification of formula $\crn$ (\ref{eq:matrix_crn}), which we call $\crn'$ in the following.
  The single universal variable $z$ is replaced by a number of variables $z_{ij}$ for every pair $i,j \in 1..N$.
  It follows that the strengthening rule is never applicable and hence, the proof system is as strong as level-ordered $\qres$ which has an exponential refutation of $\crn$ while $\expres$ has a polynomial refutation since the expansion tree has still only two branches~\cite{journals/tcs/JanotaM15}.
  \qed
\end{proof}

When compared to $\qres$, the strengthening rule can be interpreted as a step towards breaking the level-ordered constraint inherent to $\redres$.
The calculus, however, is not as strong as $\qres$.
\begin{corollary} \label{thm:strengthen-not-simulate-qres}
  $\redres$ with strengthening rule does not $p$-sim.~$\qres$.
\end{corollary}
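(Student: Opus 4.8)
The plan is to exhibit a formula family that unrestricted $\qres$ refutes in polynomial size while $\redres$ with the strengthening rule still requires exponential size, mirroring the template of Corollary~\ref{thm:strengthen-not-simulate-expres} but now on the \emph{other} side of the level-ordering gap. The natural candidate is the family $\dagn$ of Goerdt~\cite{journals/amai/Goerdt92}, already invoked above when remarking that the level-ordering constraint can force an exponential blow-up: $\dagn$ has polynomial (DAG-like, unrestricted) $\qres$ refutations but only exponential level-ordered ones. Since $\redres$ is $p$-equivalent to level-ordered $\qres$ by Theorem~\ref{thm:redres-level-ordered-qres-equivalent}, plain $\redres$ is already exponentially weaker than $\qres$ on this family; the only new content of the corollary is that the strengthening rule does not close this gap. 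Hence it suffices to show that on $\dagn$ the rule $(\mathrm{strengthen})$ can never fire, so that the extended calculus collapses to plain $\redres$ on $\dagn$ and inherits its exponential lower bound, while $\qres$ stays polynomial.

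First I would set up $\dagn$ as a QBF whose existential variables are spread over several quantifier blocks in such a way that the level-ordering constraint forces exactly the ordered (regular) resolution order that Goerdt proves to be exponentially worse than the unrestricted one; the short unrestricted $\qres$ refutation is then the lift of his polynomial general-resolution proof, with universal reductions applied once the relevant existential literals have been resolved away. Second, and this is the crux, I would verify that the side condition of $(\mathrm{strengthen})$ is never satisfiable on $\dagn$. The rule fires at an existential level~$k$ only when there are clauses $j_1,\dots,j_n$ whose inner literals (those bound at levels $>k$) are subsumed by the inner literals of some clause, i.e.\ $\literals(j,>k)\subseteq\literals(i,>k)$. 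I would arrange $\dagn$ so that no two distinct clauses stand in this inner-subsumption relation---if necessary by the same device that turns $\crn$ into $\crn'$ in Corollary~\ref{thm:strengthen-not-simulate-expres}, namely equipping each clause with a private inner universal variable. Then no instance of $(\mathrm{strengthen})$ is ever licensed, so every derivation uses only $(\resrule)$, $(\redrule)$, and $(\initrule)$, i.e.\ is already a plain $\redres$ refutation.

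Putting the pieces together, on $\dagn$ the extended calculus is exactly as strong as level-ordered $\qres$ and therefore needs size $2^{\Omega(n)}$, whereas $\qres$ refutes $\dagn$ in size $\mathrm{poly}(n)$. No polynomial $p$ can bound this blow-up, so $\redres$ with the strengthening rule does not $p$-simulate $\qres$, as claimed.

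The main obstacle is the second step: arguing that $(\mathrm{strengthen})$ is genuinely inapplicable on the chosen family. This is not automatic, since $\dagn$ must simultaneously (i)~keep the polynomial unrestricted $\qres$ upper bound, (ii)~retain the exponential level-ordered lower bound, and (iii)~defeat the inner-subsumption side condition. The delicate point is that the modification used to guarantee~(iii)---private universal variables per clause---must not accidentally destroy~(i) or~(ii); I expect this to require checking that the extra reducible universals neither shorten the level-ordered proofs nor obstruct the unrestricted one, exactly as the analogous verification succeeds for $\crn'$ in Corollary~\ref{thm:strengthen-not-simulate-expres}.
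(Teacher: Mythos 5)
Your overall template is the right one, and it is essentially the template the paper also follows: exhibit a family on which the strengthening rule can never fire, so that any refutation in the extended calculus is a plain $\redres$ refutation, hence (by Theorem~\ref{thm:redres-level-ordered-qres-equivalent}) translates with polynomial overhead into a level-ordered $\qres$ refutation and inherits an exponential lower bound, while unrestricted $\qres$ refutes the family polynomially. Where you differ is in the choice of witness. The paper simply reuses $\crn'$ from Corollary~\ref{thm:strengthen-not-simulate-expres}: for that formula the inapplicability of strengthening and the exponential level-ordered lower bound are already in hand, so the only new ingredient is the polynomial upper bound, obtained by observing that the tree-like $\qres$ refutation of $\crn$ given by Mahajan and Shukla~\cite{journals/ipl/MahajanS16} can be modified for $\crn'$. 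Your witness---a QBF lift of Goerdt's ordered-resolution-hard pigeonhole formula~\cite{journals/amai/Goerdt92}, with per-clause private universals to defeat the subset side condition---is precisely the device the paper deploys later for the stronger statements (Theorems~\ref{thm:redexpres-not-simulate-qres} and~\ref{thm:redexpres-strengthen-qres-incomparable}). It is viable and in fact proves more than this corollary needs (it also defeats the expansion rule), but in exchange you must actually discharge the three verifications you flag yourself: that the quantifier prefix forces Goerdt's bad order on level-ordered proofs, that the private universals block strengthening at \emph{every} existential level, and that neither the unrestricted upper bound nor the level-ordered lower bound is destroyed by the modification. The $\crn'$ route gets all of this for free from the preceding corollary.

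One concrete hazard you should fix: you name your witness $\dagn$ and attribute it to Goerdt, but in this paper $\dagn$ denotes formula (2) of Janota and Marques-Silva~\cite{journals/tcs/JanotaM15} (introduced in the proof of Theorem~\ref{thm:combined-stronger-than-applying-independently}), which has \emph{polynomial} level-ordered $\qres$ refutations---exactly the opposite of the lower bound you need. Read against the paper's notation, your claim that ``on $\dagn$ the extended calculus needs size $2^{\Omega(n)}$'' is false: plain $\redres$ refutes the paper's $\dagn$ polynomially by Theorem~\ref{thm:redres-level-ordered-qres-equivalent}. Your argument survives only because the formula you actually describe is Goerdt's, not the paper's $\dagn$; give it a fresh name.
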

\begin{proof}
	The formula $\crn'$ from the previous proof has a polynomial (tree-like) $\qres$ proof.
	The proof for $\crn$ given by Mahajan and Shukla~\cite{journals/ipl/MahajanS16} can be modified for $\crn'$.
	\qed
\end{proof}
Both results follow from the fact that the strengthening rule as presented is not applicable to the formula $\crn'$.
Where in $\crn$, the clauses $C_{\overline{ij}}$ are equal with respect to the inner quantifier when $j$ is fixed ($\overline{z} \lor b_j$), in $\crn'$ they are all different ($\overline{z}_{ij} \lor b_j$).
This difference is only due to the universal variables $z_{ij}$.
Thus, we propose a stronger version of the strengthening rule that does the subset check only on the existential variables.
For the universal literals, one additionally has to make sure that no resolvent produces a tautology (as it is the case in $\crn'$).
We leave the formalization to future work.

\subsection{Expansion} \label{sec:expansion-rule}

The levelized nature of the proof system allows us to introduce additional rules that can reason about quantified subformulas.
In the following, we introduce such a rule that allows us to use the $\expres$ calculus~\cite{journals/tcs/JanotaM15} within a $\redres$ proof.

We start by giving necessary notations used to define $\expres$.
We refer the reader to~\cite{journals/tcs/JanotaM15} for further information.
\begin{definition}[adapted from \cite{journals/tcs/JanotaM15}]
\begin{itemize}
  \item A \emph{$\forall$-expansion tree} for QBF $\Phi$ with $u$ universal quantifier blocks is a rooted tree $\tree$ such that every path $p_0 \xrightarrow{\alpha_1} p_1 \cdots \xrightarrow{\alpha_u} p_u$ in $\tree$ from the root $p_0$ to some leaf $p_u$ has exactly $u$ edges and each edge $p_{i-1} \xrightarrow{\alpha_i} p_i$ is labeled with a total assignment $\alpha_u$ to the universal variables at universal level $u$. Each path in $\tree$ is uniquely defined by its labeling.
  \item Let $\tree$ be a $\forall$-expansion tree and $P = p_0 \xrightarrow{\alpha_1} p_1 \cdots \xrightarrow{\alpha_u} p_u$ be a path from the root $p_0$ to some leaf $p_u$.
    \begin{enumerate}
      \item For an existential variable $x$ we define $\expandvar(P,x) = x^\assignment$ where $x^\assignment$ is a fresh variable and $\assignment$ is the universal assignment of the dependencies of $x$.
      \item For a propositional formula $\varphi$ define $\expand(P,\varphi)$ as instantiating $\varphi$ with $\alpha_1,\dots,\alpha_u$ and replacing every existential variable $x$ by $\expandvar(P,x)$.
      \item Define $\expand(\tree,\Phi)$ as the conjunction of all $\expand(P,\varphi)$ for each root-to-leaf $P$ in $\tree$.
    \end{enumerate}
\end{itemize}
\end{definition}
In contrast to previous work, we allow to use the expansion rule on quantified subformulas of $\Phi$ additionally to applying it to $\Phi$ directly.
By $\mathcal{C}^{\geq k}$ we denote a set of clauses that only contain literals bound at level $\geq k$.
\begin{align*}
  \infer[\expresrule]
  { \pobj^{k-1} }
  {\tree & \mathcal{C}^{\geq k} & \pi} \qquad
  &
  \begin{minipage}[c]{0.8\textwidth}
  $Q_k = \exists, \pi \text{ is a resolution refutation of the expansion}$\\
  formula $\expand(\tree, \exists X_k \ldot \forall X_{k+1} \dots \exists X_m \ldot \mathcal{C}^{\geq k})$\\
  $\pobj^{k-1} = \set{i \mid C_i \in \mathcal{C}}^{k-1}$
  \end{minipage}
\end{align*}
The rule states that if there is a universal expansion of the quantified Boolean formula $\exists X_k \ldot \forall X_{k+1} \dots \exists X_m \ldot \mathcal{C}^{\geq k}$ and a resolution refutation $\pi$ for this expansion, then there is no existential assignment that satisfies clauses $\mathcal{C}$ from level~$k$.
The size of the expansion rule is the sum of the size of the expansion tree and resolution proof~\cite{journals/tcs/JanotaM15}.

\begin{example} \label{ex:expres-rule}
  We demonstrate the interplay between $(\expresrule)$ and the $\redres$ calculus on the following formula
  \begin{align*}
    \overbrace{ \exists e_1 }^1 \ldot 
    \overbrace{ \forall u_1 }^2 \ldot
    \overbrace{ \exists c_1, c_2 }^3 \ldot
    \overbrace{ \forall a }^4 \ldot
    \overbrace{ \exists b \ldot
    \exists x }^5 \ldot
    \overbrace{ \forall z }^6 \ldot
    \overbrace{ \exists t }^7 \ldot \\ \hspace{-20pt}
    \underbrace{ (\overline{e}_1 \lor c_1) }_1
    \underbrace{ (\overline{u}_1 \lor c_1) }_2
    \underbrace{ (e_1 \lor c_2) }_3
    \underbrace{ (u_1 \lor c_2) }_4
    \underbrace{ (\overline{c}_1 \lor \overline{c}_2 \lor \overline{b} \lor \overline{a}) }_5
    \underbrace{ (z \lor t \lor b)  }_6
    \underbrace{ (\overline{z} \lor \overline{t})  }_7
    \underbrace{ (x \lor \overline{t})  }_8
    \underbrace{ (\overline{x} \lor t)  }_9
  \end{align*}
  
  To apply $(\expresrule)$, we use the clauses 5--9 from quantifier level 5, i.e., $\mathcal{C}^{\geq 5} = \set{
    (\overline{b})
    (z \lor t \lor b)
    (\overline{z} \lor \overline{t})
    (x \lor \overline{t})
    (\overline{x} \lor t)
  }$.
  The corresponding quantifier prefix is $\exists b \exists x \forall z \exists t$.
  Using the complete expansion of $z$ ($\set{z \to 0, z \to 1}$) as the expansion tree $\tree$, we get the following expansion formula
  \begin{equation*}
    (\overline{b})
    (t^\set{z \to 0} \lor b)
    (x \lor \overline{t}^\set{z \to 0})
    (\overline{x} \lor t^\set{z \to 0})
    (\overline{t}^\set{z \to 1})
    (x \lor \overline{t}^\set{z \to 1})
    (\overline{x} \lor t^\set{z \to 1}) \enspace,
  \end{equation*}
  which has a simple resolution proof $\pi$.
  The conclusion of $(\expresrule)$ leads to the proof object $\set{ 5,6,7,8,9 }^4$, but only clause 5 contains literals bound before quantification level $5$.
  After a universal reduction, the proof continues as described in Example~\ref{ex:example_redres_proof}.
\end{example}

\begin{theorem}
  The $\forall$exp-res rule is sound.
\end{theorem}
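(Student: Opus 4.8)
The plan is to reuse the invariant from the soundness direction of Theorem~\ref{thm:redres-soundness}: it suffices to show that deriving the conclusion $\pobj^{k-1} = \set{i \mid C_i \in \mathcal{C}}^{k-1}$ does not affect satisfiability, i.e.\ that $\Phi$ and $\Phi \land D$ are equisatisfiable, where $D = \bigvee_{C_i \in \mathcal{C}} \bigvee_{l \in \literals(i,\leq k-1)} l$ is the clause represented by $\pobj^{k-1}$. Note that in this rule $\mathcal{C}$ consists of original matrix clauses, so $D$ is an honest disjunction of original literals (no fresh variables as in the strengthening rule). One direction is immediate: the matrix of $\Phi$ is a sub-conjunction of that of $\Phi \land D$, so any winning existential strategy for $\Phi \land D$ also wins $\Phi$. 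All the work lies in the converse, namely that adjoining $D$ cannot turn a true QBF false.

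First I would invoke the soundness of the $\expres$ calculus~\cite{journals/tcs/JanotaM15}: since $\pi$ is a resolution refutation of the expansion formula $\expand(\tree, \exists X_k \ldot \forall X_{k+1} \dots \exists X_m \ldot \mathcal{C}^{\geq k})$, the sub-QBF $\exists X_k \ldot \forall X_{k+1} \dots \exists X_m \ldot \mathcal{C}^{\geq k}$ is false. Here I rely on the fact that $\mathcal{C}^{\geq k}$ is exactly the set of level-$\geq k$ projections $\literals(i,\geq k)$ of the clauses $C_i \in \mathcal{C}$, which is how the premise of $(\expresrule)$ is set up (cf.\ Example~\ref{ex:expres-rule}).

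Next, assuming $\Phi$ is true, fix a winning existential strategy $\sigma$ and consider any partial play $\rho$ assigning the variables of levels $1,\dots,k-1$ consistently with $\sigma$. I claim $\rho \vDash D$. Suppose not; then every literal bound at level $\leq k-1$ in every $C_i \in \mathcal{C}$ is falsified by $\rho$, so the restriction of each such clause collapses to its level-$\geq k$ part $\literals(i,\geq k)$. Consequently the restricted matrix $\Phi[\rho]$ contains $\mathcal{C}^{\geq k}$ as a sub-conjunction. Since $\sigma$ still wins the subgame $\exists X_k \ldot \forall X_{k+1} \dots \exists X_m \ldot \Phi[\rho]$, and satisfying a larger matrix entails satisfying any of its sub-conjunctions, the existential player also wins $\exists X_k \ldot \forall X_{k+1} \dots \exists X_m \ldot \mathcal{C}^{\geq k}$, contradicting the previous paragraph. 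Hence $D$ is satisfied along every play consistent with $\sigma$, so $\sigma$ witnesses that $\Phi \land D$ is true, establishing equisatisfiability.

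The main obstacle is the game-semantic bookkeeping in this converse direction: one must argue at the level of strategies rather than single assignments, i.e.\ that the projection of a winning strategy to the subgame below level $k$ still wins the restricted game $\Phi[\rho]$, and that winning the game with the full restricted matrix implies winning the one whose matrix is only $\mathcal{C}^{\geq k}$. Once this monotonicity is in place, the clash with $\expres$ soundness is immediate and the soundness of $(\expresrule)$ follows.
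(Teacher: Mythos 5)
Your proof is correct, but it takes a genuinely different route from the paper's. The paper argues by contradiction and proof-theoretically: if the conclusion $\pobj^{k-1}$ occurred in a refutation of a true QBF $\Phi$, then by the invariant of Theorem~\ref{thm:redres-soundness} the clause $D = \bigvee_{i \in \pobj} \bigvee_{l \in \literals(i,<k)} l$ would make $\Phi$ false; but the paper observes that this very clause is derivable by expanding the \emph{full} formula $\Phi$ under $\tree$ and replaying the resolution proof $\pi$ there (the expanded copies of the clauses $C_i$ carry their outer literals $\literals(i,<k)$ along, so $\pi$ derives $D$ instead of the empty clause), and clauses derived this way can be added conjunctively without changing satisfiability --- contradiction. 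You instead invoke the refutational soundness of $\expres$ only on the inner sub-QBF $\exists X_k \ldot \forall X_{k+1} \dots \exists X_m \ldot \mathcal{C}^{\geq k}$, concluding it is false, and lift this to $\Phi$ by a direct game-semantic argument: a play of a winning strategy for $\Phi$ down to level $k-1$ that falsified $D$ would leave a subgame whose matrix contains $\mathcal{C}^{\geq k}$ as a sub-conjunction, so winning that subgame would make the false sub-QBF true. So the two proofs apply $\expres$ soundness at different places (full formula vs.\ inner subformula) and use different lifting mechanisms (syntactic transfer of $\pi$ vs.\ strategy restriction plus matrix monotonicity). Your version is arguably more self-contained: it needs only the standard refutational soundness statement of $\expres$ and standard QBF semantics, whereas the paper's one-line claim that expansion-plus-resolution-derived clauses ``can be conjunctively added to the matrix without changing satisfiability'' is itself a lemma whose justification would essentially be your semantic argument; the price is the strategy bookkeeping you acknowledge, all of which is standard.
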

\begin{proof}
  Assume otherwise, then one would be able to derive a proof object $\pobj^{k-1}$ that is part of a $\redres$ refutation proof for true QBF $\Phi$.
  Thus, the clause corresponding to $\pobj^{k-1}$ (cf.~proof of Theorem~\ref{thm:redres-soundness}) $(\bigvee_{i \in \pobj} \bigvee_{l \in \literals(i, < k)} l)$ made $\Phi$ false.
  However, the same clause can be derived directly by applying the expansion $\tree$ to the original QBF, i.e., expanding universal variables beginning with quantification level $k+1$, and propositional resolution on the resulting expansion formula.
  Thus, this clause can be conjunctively added to the matrix without changing satisfiability, leading to a contradiction.
  \qed
\end{proof}

The resulting proof system can be viewed as a unification of the currently known CEGAR approaches for solving quantified Boolean formulas~\cite{journals/ai/JanotaKMC16,conf/ijcai/JanotaM15,conf/fmcad/RabeT15}.
\begin{theorem} \label{thm:expres-vs-redexpres}
  $\expres$ does not $p$-simulate $\redexpres$.
\end{theorem}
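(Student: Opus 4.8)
The plan is to prove the non-simulation by exhibiting a family of QBFs with polynomial-size $\redexpres$ refutations but only exponential-size $\expres$ refutations. The key structural observation is that $\redexpres$ is obtained from $\redres$ simply by adding the rule $(\expresrule)$, so $\redexpres$ trivially $p$-simulates $\redres$ (one is free not to use the new rule); composing this with Theorem~\ref{thm:redres-level-ordered-qres-equivalent}, $\redexpres$ $p$-simulates level-ordered $\qres$. It therefore suffices to find a family $F_n$ that has short level-ordered $\qres$ refutations yet is exponentially hard for $\expres$. Such a family is precisely the witness for one direction of the incomparability between $\expres$ and $\redres$ already recorded in Figure~\ref{fig:overview-proof-systems} (via \cite{journals/ipl/MahajanS16} together with Theorem~\ref{thm:redres-level-ordered-qres-equivalent}), so much of the work can be imported rather than redone.

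First I would fix $F_n$ to be a formula whose unique winning universal strategy is highly branching, e.g.\ one over a prefix $\exists X \ldot \forall U \ldot \exists T$ in which the universal player must in effect copy the $X$-assignment onto $U$ before a conflict arises. For such formulas a single universal reduction per clause, followed by level-ordered resolution on the inner existential block, yields a refutation of size polynomial in $n$; by Theorem~\ref{thm:redres-level-ordered-qres-equivalent} this transfers to a polynomial $\redres$ proof, and hence a polynomial $\redexpres$ proof. This is the routine half of the argument, and it reuses exactly the reduction-then-resolution pattern already exercised in Example~\ref{ex:example_redres_proof}.

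The hard part will be the matching $\expres$ lower bound. Since $\expres$ refutes $F_n$ only through a $\forall$-expansion tree $\tree$ together with a propositional resolution refutation of $\expand(\tree, F_n)$, I must argue that any $\tree$ adequate for a refutation is forced to contain $2^{\Omega(n)}$ leaves — that is, the expansion must account for exponentially many distinct universal assignments before the expansion formula becomes propositionally refutable at all. This is the delicate, formula-specific step; I would either invoke the existing lower bound for this family directly, or adapt the cost/capacity-style counting used to separate $\expres$ from $\qres$ in the literature, showing that too few expansion leaves leave the expansion formula satisfiable. Combining the two halves gives that $\expres$ cannot $p$-simulate $\redexpres$.

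Finally, I would remark that the argument is the mirror image of the $\crn$-based separations used elsewhere in the paper: there a family easy for $\expres$ defeats level-ordered $\qres$, whereas here a family easy for level-ordered $\qres$, transported into $\redexpres$ through Theorem~\ref{thm:redres-level-ordered-qres-equivalent}, defeats $\expres$. Because $\redexpres$ also $p$-simulates $\expres$ directly (apply $(\expresrule)$ at the outermost existential level with $\mathcal{C}^{\geq 1}$ the whole matrix), the result in fact sharpens to a strict separation, matching the edge drawn in Figure~\ref{fig:overview-proof-systems}.
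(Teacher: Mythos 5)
Your first paragraph is, in substance, exactly the paper's proof: the paper disposes of this theorem in a single line by citing \cite{journals/ipl/MahajanS16} for the fact that $\expres$ does not $p$-simulate level-ordered $\qres$, the implicit other half being precisely your observation that $\redexpres$ contains $\redres$ and hence, via Theorem~\ref{thm:redres-level-ordered-qres-equivalent}, $p$-simulates level-ordered $\qres$. Had you stopped there and imported the known witness together with its lower bound, your proof would be complete and identical to the paper's.

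The gap is in your attempt to make the witness concrete. A family over a prefix $\exists X \ldot \forall U \ldot \exists T$ in which the universal player wins by copying the $X$-assignment onto $U$ is an equality-style formula, and such families are hard for level-ordered $\qres$ as well, not easy: resolving away the $T$-block in level order accumulates the universal literals of all clauses used, so one is forced to derive exponentially many distinct clauses over $X \cup U$ before universal reduction can fire. In fact, no witness with a single universal block can work at all. In a level-ordered refutation of an $\exists\forall\exists$ formula, each clause $D_j$ over $X$ is obtained by reducing a clause $D_j \cup U_j$ with $U_j$ non-tautological, and every clause in the $T$-resolution subderivation of $D_j \cup U_j$ has its universal literals contained in $U_j$ (universal literals only accumulate under resolution). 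Expanding with one assignment $\alpha_j$ falsifying $U_j$ per such clause lets the whole subderivation replay on the annotated variables, so $\expres$ $p$-simulates level-ordered $\qres$ on this prefix class; consequently any family that is hard for $\expres$ (as your lower-bound half requires) is automatically hard for level-ordered $\qres$ (contradicting your upper-bound half). The separation genuinely needs unboundedly many quantifier alternations: the standard witness is $\dagn$, formula (2) of \cite{journals/tcs/JanotaM15}, with interleaved prefix $\exists e_1 \forall u_1 \exists c_1 c_2 \dots \exists e_n \forall u_n \exists c_{2n-1} c_{2n}$, whose polynomial level-ordered $\qres$ refutation and exponential $\expres$ lower bound are both proved there and are reused by this paper in the proof of Theorem~\ref{thm:combined-stronger-than-applying-independently}. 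So keep your first paragraph, name $\dagn$ as the imported witness, and drop the $\exists\forall\exists$ sketch.
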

\begin{proof}
  $\expres$ does not $p$-simulate level-ordered $Q$-resolution~\cite{journals/ipl/MahajanS16}.
  \qed
\end{proof}
The combination of both rules makes the proof system stronger than merely choosing between expansion and resolution proof upfront.
\begin{theorem} \label{thm:combined-stronger-than-applying-independently}
  There is a family of quantified Boolean formulas that have polynomial refutation in $\redexpres$, but have only exponential refutations in $\redres$ and $\expres$.
\end{theorem}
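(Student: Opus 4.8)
The plan is to fuse the two already-available separating families into a single formula on which each pure calculus is forced to pay for its weak half. Recall that $\crn$ (the family used in the proof of Theorem~\ref{thm:separation-strengthen-rule}) has a polynomial $\expres$ refutation but, via Theorem~\ref{thm:redres-level-ordered-qres-equivalent} together with~\cite{journals/tcs/JanotaM15}, only exponential $\redres$ refutations; conversely, by the result of Mahajan and Shukla~\cite{journals/ipl/MahajanS16} behind Theorem~\ref{thm:expres-vs-redexpres} there is a family $\dagn$ with a polynomial $\redres$ (level-ordered $\qres$) refutation but only exponential $\expres$ refutations. I would define $\Phi_n$ by placing $\dagn$ on the outer quantifier blocks and a copy of $\crn$ on the inner blocks, starting at some existential level $k$, so that every $\crn$-variable is bound at level $\geq k$. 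The two parts are glued by a small set of connector clauses: the $\crn$-clauses additionally carry a few literals of the outer $\dagn$-variables, chosen so that (i) truncating these clauses to levels $\geq k$ reproduces exactly $\crn$, and (ii) $\Phi_n$ is false only because the residual outer literals left by $\crn$ trigger the falsity of $\dagn$.

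First I would exhibit the polynomial $\redexpres$ refutation. Since truncating the inner clauses to level $\geq k$ yields precisely $\crn$, the subformula $\exists X_k\ldot\forall X_{k+1}\dots\exists X_n\ldot\mathcal{C}^{\geq k}$ is false, and the rule $(\expresrule)$ applies at level $k$ with the two-branch expansion tree of $\crn$; this is exactly the polynomial expansion argument of~\cite{journals/tcs/JanotaM15}. Its conclusion is a proof object $\pobj^{k-1}$ that still carries the connector literals of the involved $\crn$-clauses. These residual literals are the seed of the polynomial $\redres$ refutation of the outer $\dagn$ part, which is then completed with $(\resrule)$ and $(\redrule)$. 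The composed derivation has polynomial size. Note that $(\expresrule)$ cannot be used profitably on the outer $\dagn$ blocks (they are hard for expansion) and $(\resrule)$ cannot discharge the inner $\crn$ blocks efficiently, so the refutation genuinely interleaves the two rules rather than committing to one in advance.

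Next I would prove the two lower bounds by restriction. For $\expres$, any refutation fixes one global expansion tree over all universal variables and a resolution refutation of the expansion formula; restricting the inner $\crn$-existentials and the single $\crn$-universal to an assignment that collapses the $\crn$-clauses to their connector residue turns this into an $\expres$ refutation of a copy of $\dagn$ of no larger size, which must be exponential. Dually, using Theorem~\ref{thm:redres-level-ordered-qres-equivalent} a $\redres$ refutation of $\Phi_n$ is a level-ordered $\qres$ refutation; restricting the outer $\dagn$-variables, which are easy for resolution, leaves a level-ordered $\qres$ refutation of the inner $\crn$ copy, which is exponential by~\cite{journals/tcs/JanotaM15}.

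The hard part will be the design of the connector: it must make $\Phi_n$ false \emph{only} through the interaction of both cores, so that neither restriction yields a trivial (already satisfied) instance and each extracts a genuine refutation of $\dagn$, respectively $\crn$. Concretely, I must rule out two shortcuts---an $\expres$ proof that refutes $\Phi_n$ through the $\crn$ half alone (cheap for expansion) while ignoring $\dagn$, and a $\redres$ proof that succeeds through the $\dagn$ half alone without resolving $\crn$. Verifying that the restriction maps are size-preserving and that the connector clauses neither enable a cheap universal reduction nor shrink the required expansion is where the real work lies; the upper bound and both component lower bounds then follow immediately from the cited separations.
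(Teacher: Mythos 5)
Your blueprint is the paper's own: an outer copy of $\dagn$ glued to an inner expansion-friendly formula, one application of $(\expresrule)$ at the interface that produces the missing long clause $(\bigvee_i \overline{c}_i)$, the rest discharged in plain $\redres$, and both lower bounds inherited from the component formulas by restriction. The substantive deviation is your choice of inner formula: the paper uses $\qparity_n$~\cite{conf/stacs/BeyersdorffCJ15}, which is hard for \emph{unrestricted} $\qres$, whereas $\crn$ is hard only for level-ordered $\qres$ and has a polynomial tree-like $\qres$ refutation~\cite{journals/ipl/MahajanS16}. That choice is admissible for this theorem (only $\redres$- and $\expres$-hardness are claimed), but it makes your obligations strictly harder: your combined formula is then easy for unrestricted $\qres$, so the $\redres$ bound cannot be argued, as the paper does, from the impossibility of deriving $(\bigvee_i \overline{c}_i)$ in $\qres$ at all; it must genuinely exploit level-ordering, i.e., your restriction argument has to be carried out in full and cannot be waved through.

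And exactly there the proposal has concrete gaps. First, the restriction you describe for the $\expres$ bound cannot exist: collapsing $(x_{ij} \lor z \lor a_i)$ to its connector residue forces $x_{ij} = \bfalse$ and $z = \bfalse$, while collapsing $(\overline{x}_{ij} \lor \overline{z} \lor b_j)$ forces $x_{ij} = \btrue$ and $z = \btrue$. The working restriction must instead \emph{satisfy} the grid clauses (set every expansion copy of $a_i$ and $b_j$ to $\btrue$) so that only the two long clauses of $\crn$ collapse to their residues, which also means those residues, not arbitrary connector sprinkling, must reconstitute $\dagn$'s long clause. Second, neither bound may restrict universal variables, yet you propose to set $z$ on the $\expres$ side and the outer $\dagn$-variables (which include the $u_i$) on the $\redres$ side: assigning a universal $u \mapsto \btrue$ invalidates every $\forall$-reduction step deriving $C$ from $C \cup \set{u}$ and can even turn the QBF true, so $\qres$ and $\expres$ refutations are not preserved under universal restrictions. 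Existential-only restrictions happen to suffice (all $c_i \mapsto \btrue$ satisfies every $\dagn$ clause and strips the connectors, leaving pure $\crn$; all copies of $a_i, b_j \mapsto \btrue$ leaves exactly $\dagn$), but that is a different argument from the one you state. Third, the part you defer as ``the design of the connector'' hides the one non-obvious ingredient of the paper's construction: a universal level (the paper's $\forall a$) must separate the last existential block of $\dagn$ from the first existential block of $\crn$. Without it the two blocks merge into a single level $k$, the connector literals $\overline{c}_i$ then sit at level $k$ rather than below it and are forced into $\mathcal{C}^{\geq k}$, the resulting expansion formula becomes satisfiable, and $(\expresrule)$ can no longer derive the clause $(\bigvee_i \overline{c}_i)$ on which your polynomial $\redexpres$ refutation rests.
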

\begin{proof}
  For this proof, we take two formulas that are hard for $\qres$ and $\expres$, respectively.
  We build a new family of formulas that has a polynomial refutation in $\redexpres$, but only exponential refutations in $\redres$ and $\expres$.
  
  The first formula we consider is formula (2) form~\cite{journals/tcs/JanotaM15}, that we call $\dagn$ in the following:
  \begin{align*}
    &\exists e_1 \forall u_1 \exists c_1 c_2 \dots
    \exists e_n \forall u_n \exists c_{2n-1} c_{2n} \ldot \\
    & (\bigvee_{i \in 1 \dots 2n} \overline{c}_i) \land \bigwedge_{i \in 1 \dots n} (\overline{e}_i \lor c_{2i-1}) \land (\overline{u}_i \lor c_{2i-1}) \land (e_i \lor c_{2i}) \land (u_i \lor c_{2i})
  \end{align*}
  It is known that $\dagn$ has a polynomial level-ordered $\qres$ proof and only exponential $\expres$ proofs~\cite{journals/tcs/JanotaM15}.
  As a second formula, we use the $\qparity_n$ formula~\cite{conf/stacs/BeyersdorffCJ15}
  \begin{equation*}
    \exists x_1 \dots x_n \forall z \exists t_2 \dots t_n \ldot
    \xor(x_1,x_2,t_2) \land
    \bigwedge_{i \in 3 \dots n} \xor(t_{i-1}, x_i, t_i) \land (z \lor t_n) \land (\overline{z} \lor \overline{t}_n)
  \end{equation*}
  where $\xor(o_1, o_2, o) = (\overline{o}_1 \lor \overline{o}_2 \lor \overline{o}) \land (o_1 \lor o_2 \lor \overline{o}) \land (\overline{o}_1 \lor o_2 \lor o) \land (o_1 \lor \overline{o}_2 \lor o)$ defines $o$ to be equal to $o_1 \oplus o_2$.
  $\qparity_n$ has a polynomial $\expres$ refutation but only exponential $\qres$ refutations~\cite{conf/stacs/BeyersdorffCJ15}.
  We construct the following formula
  \begin{align*}
    &
    \exists e_1 \forall u_1 \exists c_1 c_2 \dots
    \exists e_n \forall u_n \exists c_{2n-1} c_{2n} \ldot
    \forall a \exists b \ldot 
    \exists x_1 \dots x_n \forall z \exists t_2 \dots t_n \ldot\\
    &
    \bigwedge_{i \in 1 \dots n} (\overline{e}_i \lor c_{2i-1})
    {}\land (\overline{u}_i \lor c_{2i-1})
    {}\land (e_i \lor c_{2i})
    {}\land (u_i \lor c_{2i}) \land{} \\
    &
    (\overline{a} \lor \overline{b} \lor \hspace{-6pt} \bigvee_{i \in 1 \dots 2n} \hspace{-6pt} \overline{c}_i) \land
    \xor(x_1,x_2,t_2)
    {}\land \hspace{-6pt}\bigwedge_{i \in 3 \dots n} \hspace{-6pt} \xor(t_{i-1}, x_i, t_i)
    {}\land (z \lor t_n \lor b)
    {}\land (\overline{z} \lor \overline{t}_n)
  \end{align*}
  We argue in the following that this formula has a polynomial refutation in $\redexpres$.
  First, using $(\expresrule)$ we can derive the proof object containing the clause $(\overline{a} \lor \bigvee_{i \in \set{1 \dots 2n}} \overline{c}_i)$ using the expansion tree $\tree = \set{z \rightarrow 0, z \rightarrow 1}$ and the clauses from the last row (analogue to Example~\ref{ex:expres-rule}).
  After applying universal reduction, the proof object representing clause $(\bigvee_{i \in \set{1 \dots 2n}} \overline{c}_i)$ can be derived.
  For the remaining formula, there is a polynomial and level-ordered resolution proof~\cite{journals/tcs/JanotaM15}, thus, the formula has a polynomial $\redexpres$ proof.
  
  There is no polynomial $\qres$ proof, because deriving $(\bigvee_{i \in \set{1 \dots 2n}} \overline{c}_i)$ is exponential in $\qres$.
  Likewise, there is no polynomial $\expres$ proof as the formula after deriving this clause has only exponential $\expres$ refutations.
  \qed
\end{proof}

One question that remains open, is how the new proof system compares to unrestricted $\qres$.
We already know that the new proof system polynomially simulates both tree-like $\qres$ as well as level-ordered $\qres$.

\begin{theorem} \label{thm:redexpres-not-simulate-qres}
  $\redexpres$ does not $p$-simulate $\qres$.
\end{theorem}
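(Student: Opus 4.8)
The plan is to exhibit a family $\{G_n\}$ of false QBFs that have polynomial-size $\qres$ refutations but require exponential-size $\redexpres$ refutations. The observation that should guide the construction is a necessary condition on any such family: since $\redexpres$ $p$-simulates both level-ordered $\qres$ (by Theorem~\ref{thm:redres-level-ordered-qres-equivalent}) and tree-like $\qres$ (through $\expres$~\cite{journals/tcs/JanotaM15}), the family $\{G_n\}$ must be \emph{simultaneously} hard for tree-like $\qres$, level-ordered $\qres$, and $\expres$, while remaining easy for unrestricted $\qres$. In particular neither $\crn$ (which is tree-like easy~\cite{journals/ipl/MahajanS16}) nor $\dagn$ (which is level-ordered easy~\cite{journals/tcs/JanotaM15}) suffices on its own, so the construction has to force all three weaknesses on the same region of the formula rather than on disjoint parts that a proof could discharge independently.

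First I would build $G_n$ by entangling two ingredients along the lines of the construction in Theorem~\ref{thm:combined-stronger-than-applying-independently}. The propositional core should be a formula that is simultaneously hard for tree-like and for order/level-restricted resolution but admits a short general (DAG, unordered) resolution refutation, relying on known propositional separations between general resolution and its tree-like and regular restrictions. This core is then wrapped in a quantifier structure of the $\dagn$ type, i.e.\ several universal blocks each followed by existentials that depend on the universals seen so far, chosen so that any $\forall$-expansion crossing these blocks is forced to have exponentially many branches. The two parts are wired together through shared variables (as the variable $b$ is shared in Theorem~\ref{thm:combined-stronger-than-applying-independently}) so that the universal shield cannot be decoupled from the core and expanded cheaply.

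The easy direction is then to exhibit the short $\qres$ refutation of $G_n$ directly: universal reduction peels off the shield cheaply (exactly as for $\dagn$), after which general DAG resolution, resolving in the unrestricted order dictated by the core, refutes the propositional part in polynomial size. The hard direction is the crux. Given any purported polynomial $\redexpres$ refutation I would argue in two steps: (i) every application of $(\expresrule)$ whose clause set $\mathcal{C}^{\geq k}$ spans the interleaved universal blocks of the shield yields an expansion formula of exponential size, so such applications cannot occur in a polynomial proof; (ii) once those applications are excluded, the remaining proof uses only level-ordered resolution (Theorem~\ref{thm:redres-level-ordered-qres-equivalent}) together with expansions confined below the shield, and this combined power still cannot circumvent the tree-like and level-ordered lower bounds of the core.

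The main obstacle is step (ii): because $(\expresrule)$ may be applied at any existential level and to any subset of clauses, ruling out every possible interleaving of expansion and level-ordered resolution — rather than each tool in isolation — requires an unconditional lower bound against the mixed system. This is exactly the kind of combination that Theorem~\ref{thm:combined-stronger-than-applying-independently} shows can be unexpectedly powerful, so the entanglement in $G_n$ must be engineered precisely so that no such mixing helps. Establishing this is likely to need a dedicated bottleneck or prover--adversary counting argument tailored to the construction, rather than a black-box appeal to the individual hardness of $\redres$ and $\expres$.
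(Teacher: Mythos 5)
You identified the right target conditions and even the same two ingredients the paper uses: $\dagn$ (hard for $\expres$, easy for level-ordered $\qres$) and Goerdt's modified pigeonhole formulas~\cite{journals/amai/Goerdt92}, lifted to QBF by inserting dummy universal quantifiers so that the level-ordering constraint forces exactly the bad variable order (hard for level-ordered $\qres$, easy for unrestricted $\qres$). Your easy direction (peel off universals, then refute the core by unrestricted DAG resolution) is also the paper's.

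The genuine gap is the one you concede yourself: step (ii), the lower bound against proofs that interleave $(\expresrule)$ with level-ordered resolution, is the entire content of the theorem, and your proposal leaves it open, hoping for a ``dedicated bottleneck or prover--adversary argument.'' The missing idea is that this can be closed structurally, and by precisely the mechanism you rejected at the outset: the paper combines the two formulas by \emph{disjunction}. Your objection that ``disjoint parts'' could be ``discharged independently'' overlooks what disjunction looks like after clausification: every clause of the matrix has the form $C_i \lor D_j$ and thus contains literals of \emph{both} formulas. The side condition of $(\expresrule)$ --- the clause set $\mathcal{C}^{\geq k}$ may contain only literals bound at levels $\geq k$ --- then blocks any ``local'' expansion of the easy-to-expand Goerdt part: an application at a level strictly inside the prefix must drop every clause reaching further out, and the surviving clause set is satisfiable, so the rule is only usable near the outermost existential level, where the expansion tree must branch over the $\dagn$ universal blocks and is therefore exponential. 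Consequently a polynomial $\redexpres$ refutation can contain no useful $(\expresrule)$ application at all; it is a plain $\redres$ refutation, hence by Theorem~\ref{thm:redres-level-ordered-qres-equivalent} a level-ordered $\qres$ refutation, and those are exponential because of the Goerdt part. The mixed lower bound thus decomposes into the two individual lower bounds already known from~\cite{journals/tcs/JanotaM15} and~\cite{journals/amai/Goerdt92} --- no new adversary argument is needed. Your shared-variable entanglement destroys exactly this clean case split, which is why you could not finish; as written, the proposal does not prove the theorem.
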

\begin{proof}[Sketch]
  We construct a formula that is hard for expansion and level-ordered $\qres$, but easy for (unrestricted) $\qres$.
  We have already seen in the proof of Theorem~\ref{thm:combined-stronger-than-applying-independently} that $\dagn$ is hard for $\expres$ but easy for $\qres$.
  However, the $\qres$ proof of $\dagn$ is level-ordered.
  Hence, we need an additional formula that is hard to refute for level-ordered $\qres$.
  We use the modified pigeon hole formula from~\cite{journals/amai/Goerdt92} where unrestricted resolution has polynomial proofs and resolution proofs that are restricted to a certain variable ordering are exponential.
  Using universal quantification, one can impose an arbitrary order on a level-ordered $\qres$ proof, thus, there is a quantified Boolean formula which has only exponential level-ordered $\qres$ but has a polynomial $\qres$ proof.
  The disjunction of those two formulas gives the required witness.
  This formula is easy to refute for $\qres$, but the first one is hard for $\expres$ and the second is hard for level-ordered $\qres$.
  \qed
\end{proof}

\subsection{Comparison Between Extensions}
We conclude this section by comparing the two extensions of the $\redres$ calculus introduced in this paper.

\begin{theorem} \label{thm:extensions-incomparable}
	$\redexpres$ and $\redres$ with strengthening rule are incomparable.
\end{theorem}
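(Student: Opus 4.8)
The statement asserts two non-$p$-simulations, which I would treat separately since one is immediate and the other requires a dedicated construction.

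\paragraph{$\redres$ with strengthening does not $p$-simulate $\redexpres$.} This direction I would obtain essentially for free from results already in hand. The expansion rule $(\expresrule)$, applied at the outermost existential level to the entire matrix, turns any $\expres$ refutation into a $\redexpres$ refutation of the same asymptotic size, so $\redexpres$ $p$-simulates $\expres$. Corollary~\ref{thm:strengthen-not-simulate-expres} provides a family, namely $\crn'$, with polynomial $\expres$ refutations (the expansion tree has only two branches) but only exponential refutations in $\redres$ with strengthening (the rule is never applicable on $\crn'$, so the calculus collapses to level-ordered $\qres$, which is exponential on $\crn'$). Composing the two facts, $\crn'$ witnesses this direction.

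\paragraph{$\redexpres$ does not $p$-simulate $\redres$ with strengthening.} Here I need a family that is easy for the strengthening calculus but hard for $\redexpres$. The only advantage strengthening has over level-ordered $\qres$ is the polynomial treatment of $\crn$ (Theorem~\ref{thm:separation-strengthen-rule}); yet $\crn$ is itself easy for $\redexpres$, since a single universal variable $z$ makes its expansion tree branch only twice. A mere disjunction of an expansion-easy and a resolution-easy part does not help, because $\redexpres$ could attack each part with the appropriate tool. I therefore need a \emph{single} formula that defeats both expansion and level-ordered resolution while leaving the strengthening rule applicable. The plan is to entangle $\crn$ with the expansion-hard family $\dagn$ of Theorem~\ref{thm:combined-stronger-than-applying-independently}, following the template of Theorem~\ref{thm:redexpres-not-simulate-qres}, but merging their \emph{universal quantifier blocks} so that they share levels without sharing variables. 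Because $(\expresrule)$ expands all universals of a block simultaneously, any expansion that would eliminate $\crn$'s universal is forced to expand all of $\dagn$'s universals as well, incurring $\dagn$'s exponential blow-up; meanwhile $\crn$'s clauses never mention $\dagn$'s universals, so each column stays inner-equivalent and the strengthening rule remains applicable exactly as in Theorem~\ref{thm:separation-strengthen-rule}.

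For the resulting family I would verify two claims. First, it has a polynomial $\redres$-with-strengthening refutation: strengthening derives the $\crn$-part as in Theorem~\ref{thm:separation-strengthen-rule}, plain $(\resrule)$ and $(\redrule)$ handle the $\dagn$-part level-ordered, and the two are combined by the final existential resolution. Second, it admits only exponential $\redexpres$ refutations: level-ordered resolution alone inherits $\crn$'s lower bound (via Theorem~\ref{thm:redres-level-ordered-qres-equivalent}), while any application of $(\expresrule)$ at a level at or above the shared universal block drags in $\dagn$'s universals and is hence exponential. I expect the second claim to be the crux, and it splits into two delicate points. The first is the genuine lower bound, namely that expansion and level-ordered resolution cannot be \emph{combined} to bypass both obstacles at once; this is exactly the subtle argument sketched for Theorem~\ref{thm:redexpres-not-simulate-qres}, which I would adapt, using the shared universal block to prevent expansion from being localised to the $\crn$-part. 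The second is ensuring the merging preserves applicability of strengthening: this is precisely the $\crn$-versus-$\crn'$ distinction noted after Corollary~\ref{thm:strengthen-not-simulate-qres}, and it forces the construction to keep $\dagn$'s universal literals out of $\crn$'s clauses so that the inner-quantifier subset condition $\literals(j,>k)\subseteq\literals(i,>k)$ continues to hold for the columns.
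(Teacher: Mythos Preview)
Your first direction is fine and matches the paper: $\crn'$ does the job, exactly as in Corollary~\ref{thm:strengthen-not-simulate-expres}.

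The second direction has a real gap. Your plan is to put $\crn$'s universal $z$ and $\dagn$'s universals $u_1,\dots,u_n$ into the \emph{same} block while keeping the clause sets disjoint, arguing that any $(\expresrule)$ application must then ``drag in'' the $u_i$. But $(\expresrule)$ takes a \emph{chosen} clause set $\mathcal{C}^{\geq k}$, and the expansion tree $\tree$ only needs enough leaves to refute $\expand(\tree,\cdot)$ over those clauses. If $\mathcal{C}$ consists solely of the $\crn$ clauses, none of which mention $u_1,\dots,u_n$, then two leaves---say $(z{=}0,u_i{=}0)$ and $(z{=}1,u_i{=}0)$---already yield the same expansion formula as the standard two-branch $\expres$ refutation of $\crn$. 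The $u_i$ are assigned, but assigning a variable that does not occur costs nothing. So your merged formula is still polynomial for $\redexpres$: one $(\expresrule)$ step refutes the $\crn$ part outright. The very design choice you make to keep strengthening applicable (keeping $\dagn$'s universals out of $\crn$'s clauses) is what lets $(\expresrule)$ localise to $\crn$.

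The paper's construction avoids this by making the dependency \emph{structural} rather than positional: $\dagn$ is nested inside $\crn$ so that the proof object serving as the premise of the strengthening rule can only be derived by first refuting a $\dagn$-shaped subformula at the inner levels. In $\redres$ with strengthening this is cheap (level-ordered resolution handles $\dagn$, then strengthening handles $\crn$). In $\redexpres$, however, any $(\expresrule)$ application that would produce that premise must include the $\dagn$ clauses---and expanding $\dagn$ is exponential---while avoiding $(\expresrule)$ altogether leaves you with level-ordered $\qres$, which is exponential on $\crn$. The entanglement is through the derivation of the premise, not through shared quantifier blocks.
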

\begin{proof}[Sketch]
The family of formulas $\crn'$ from proof of Corollary~\ref{thm:strengthen-not-simulate-expres} separates $\redexpres$ and $\redres$ with strengthening rule.
Since the strengthening rule is not applicable, all $\redres$ proofs are exponential while there is a polynomial proof in $\redexpres$.

The other direction is shown by using a similar construction as the one used in the proof of Theorem~\ref{thm:combined-stronger-than-applying-independently}.
We use a combination of $\crn$ and $\dagn$ to construct a formula that has only exponential refutations in $\redexpres$, but a polynomial refutation using the strengthening rule.
The formula $\dagn$ is used to generate the premise for the application of the strengthening rule to solve $\crn$.
To generate this premise using the rule $(\expresrule)$ one needs an exponential proof.
There is a polynomial proof for $\dagn$ in $\redres$, but there is none for $\crn$, thus, $\redexpres$ has only exponential refutations.
  \qed
\end{proof}

\begin{theorem} \label{thm:redexpres-strengthen-qres-incomparable}
	$\redexpres$ with strengthening rule and $\qres$ are incomparable.
\end{theorem}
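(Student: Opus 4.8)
The plan is to prove incomparability by producing a separating family for each direction. For the easy direction, that $\qres$ does not $p$-simulate $\redexpres$ with strengthening rule, I would reuse the parity formula $\qparity_n$. Since $\qparity_n$ admits a polynomial $\expres$ refutation~\cite{conf/stacs/BeyersdorffCJ15}, a single application of the expansion rule $(\expresrule)$ at the outermost existential level suffices: taking $\mathcal{C}^{\geq 1}$ to be the entire matrix and $\tree$ to be the complete expansion $\set{z \to 0, z \to 1}$ of the single universal variable, the expansion formula $\expand(\tree, \exists X_1 \ldot \forall X_2 \ldot \exists X_3 \ldot \mathcal{C}^{\geq 1})$ is exactly the $\expres$ expansion and has a polynomial resolution refutation $\pi$, so the rule directly derives the root object $\pobj^0$. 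Hence $\qparity_n$ has a polynomial $\redexpres$ proof, and a fortiori one in the calculus with the strengthening rule available, whereas $\qres$ requires exponential-size refutations of $\qparity_n$~\cite{conf/stacs/BeyersdorffCJ15}.

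For the substantive direction, that $\redexpres$ with strengthening rule does not $p$-simulate $\qres$, I would start from the witness built in the proof of Theorem~\ref{thm:redexpres-not-simulate-qres}: the disjunction of $\dagn$ (hard for $\expres$ but with a level-ordered $\qres$ proof) with a Goerdt-style modified pigeonhole formula lifted by universal quantification so as to force a bad variable order on any level-ordered $\qres$ proof. That formula is easy for unrestricted $\qres$ but hard for $\redexpres$. The remaining task is to show that enabling the strengthening rule does not reduce the refutation cost. To this end I would adapt the device from the proof of Corollary~\ref{thm:strengthen-not-simulate-expres}: arrange the universal variables of the witness, as $\crn'$ replaces the single $z$ by pairwise distinct $z_{ij}$, so that no two of the clauses that a strengthening step would merge agree on their literals at levels $> k$. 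Then the side condition $\literals(j,>k) \subseteq \literals(i,>k)$ of the rule fails for every nontrivial pair, the strengthening rule is never applicable in a useful way, and $\redexpres$ with strengthening collapses to plain $\redexpres$ on this family. The exponential lower bound of Theorem~\ref{thm:redexpres-not-simulate-qres} then carries over, and combining the two directions yields incomparability.

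The main obstacle is the second direction, and specifically making rigorous the claim that the strengthening rule is never helpful. The rule introduces fresh variables and could in principle be interleaved with expansion and resolution in ways that do not simply inspect the original matrix, so a purely syntactic argument on the clauses is not by itself conclusive. I would therefore establish a structural invariant: that the higher-level literal profiles $\literals(i,>k)$ of all clauses (and of all derivable proof objects) remain pairwise incomparable under subsumption throughout any proof on this family, so that the precondition of $(\mathrm{strengthen})$ can never be met on a step that would shorten the proof. Verifying that this invariant is preserved under the interaction of $(\expresrule)$, $(\resrule)$, and $(\redrule)$ is the delicate part of the argument.
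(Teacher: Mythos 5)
Your proposal takes essentially the same route as the paper's own proof: one direction comes from the $\qres$/$\expres$ separation on $\qparity_n$ (the paper cites Beyersdorff et al.\ directly, you just spell out the single $(\expresrule)$ application that embeds the $\expres$ refutation into $\redexpres$), and the other direction reuses the witness from Theorem~\ref{thm:redexpres-not-simulate-qres}, modified in the style of $\crn'$ so that the strengthening rule is never applicable. Your closing concern --- that one must rigorously show strengthening cannot help even when interleaved with $(\expresrule)$ and $(\resrule)$ on fresh variables --- is a legitimate point of care, but it is exactly the claim the paper's one-line argument also asserts without further elaboration, so there is no essential difference in approach.
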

\begin{proof}
	Follows from the proof of Theorem~\ref{thm:redexpres-not-simulate-qres} as the witnessing formula can be constructed such that the strengthening rule is not applicable.
	The other direction follows from the separation of $\qres$ and $\expres$ by Beyersdorff et al.~\cite{conf/stacs/BeyersdorffCJ15}.
	\qed
\end{proof}

\section{Experimental Evaluation}

\subsection{Implementation}

We extended the implementation of $\caqe$ with the possibility to use the rule $(\expresrule)$ as introduced in Sec.~\ref{sec:expansion-rule}\footnote{$\caqe$ is available online at \url{https://react.uni-saarland.de/tools/caqe/}.}.
While the rule is applicable at every level in the QBF in principle, the effectiveness decreases when applying it to deeply nested formulas where $\caqe$ tends to perform better~\cite{conf/fmcad/RabeT15} than $\rareqs$.
We aim to strike a balance between expansion and clausal-abstraction, i.e., keeping the best performance characteristics of both solving methods.
Thus, in our implementation, we apply the expansion refinement (additional to the clausal-abstraction refinement) to the innermost universal quantifier.

\begin{algorithm}[t]
\begin{algorithmic}[1]
\State $\varphi_k$ is the propositional abstraction for quantifier $\exists X_k$
\Procedure{solve$_\exists$}{$\exists X_k \ldot\Psi$, $\pobj^k$}
\While{$\mathit{true}$}
  \State disable clauses $C_i^k$ of $\varphi_k$ where $i \notin \pobj^k$ \Comment{those $C_i$ are already satisfied $<k$}
  \State generate candidate solution $\pobj_*^{k+1}$ using SAT solver and abstraction $\varphi_k$ \label{alg:cegar-algorithm_generate_candidate}
  \If{no candidate exists} \Comment{there is a resolution proof $\pi$}
    \State \Return UNSAT, $\pobj^{k-1}$ \label{alg:unsat-result}
  \ElsIf{$\Psi$ is propositional} \Comment{base case for structural recursion}
    \State \Return SAT, witness
  \EndIf
  \State verify candidate recursively, call \Call{solve$_\forall$}{$\Psi$, $\pobj_*^{k+1}$} \label{alg:cegar-algorithm_verify_recursively}
  \If{candidate correct}
    \State \Return SAT, witness
  \Else
    \State counterexample consists of $\pobj_\text{ce}^k$ and \textbf{expansion tree $\tree$}
    \State refine $\varphi_k$ such that one clause $C_i^k$ in with $i \in \pobj_\text{ce}$ must be satisfied
    \State \textbf{refine $\varphi_k$ with abstraction of expansion of $\Phi$ with respect to $\tree$}
  \EndIf
\EndWhile
\EndProcedure
\end{algorithmic}
\caption{Modified CEGAR solving loop for existential quantifier}
\label{alg:cegar-algorithm}
\end{algorithm}

An overview of the CEGAR algorithm is given in Algorithm~\ref{alg:cegar-algorithm}.
There is a close connection between the rules of the $\redres$ calculus and the presented algorithm.
Especially, we use a SAT solver to prove the refutation needed in the rule $(\resrule)$.
We refer to~\cite{conf/fmcad/RabeT15} for algorithmic details.
Changes to the original algorithm are written in bold text.

\paragraph{Abstraction.}
The abstraction for quantifier $\exists X_k$, written $\varphi_k$ is the projection of the clauses of the matrix to variables in $X_k$, i.e., $\bigwedge_{1 \leq i \leq m} \literals(i,k)$.
We assume that there is a operation to ``disable'' clauses in $\varphi_k$ which corresponds to the situation where a clause $C_i$ is satisfied by some variable bound before~$k$.
Likewise, for every clause we allow the assumption that this clause will be satisfied by a some variable bound after $k$.
This is used to generate candidate proof objects $\pobj_*^{k+1}$ for inner levels.
In the refinement step, this assumption can be invalidated, i.e., there is a way to force satisfaction of a clause at level $k$.
Those operations can be implemented by an incremental SAT solver and two additional literals controlling the satisfaction of clauses~\cite{conf/fmcad/RabeT15}.

\paragraph{Algorithm.}
The algorithm recurses on the structure of the quantifier prefix and communicates proof objects $\pobj$, which indicate the clauses of the matrix that are satisfied.
At an existential quantifier, the abstraction generates a candidate solution~(line~\ref{alg:cegar-algorithm_generate_candidate}) and checks recursively whether the candidate is correct~(line~\ref{alg:cegar-algorithm_verify_recursively}).
If not, the counterexample originally consists of a set of clauses (which could not be satisfied from the inner existential quantifiers).
We extend this counterexample to also include an expansion tree $\tree$ from the levels below.
Additionally to the original refinement, we also build the expansion of the QBF with respect to the expansion tree $\tree$, resulting in a QBF with the same quantifier prefix as the current level (with additional existential variables due to expansion).
This QBF is then translated into a propositional formula in the same way as the original QBF.
Lastly, the abstraction $\varphi_k$ is then conjunctively combined with this propositional formula.
Note that if the function returns UNSAT (line~\ref{alg:unsat-result}), the corresponding resolution proof from the SAT solver can be used to apply the rule $(\resrule)$ form the $\redres$ calculus.

As the underlying SAT solver in the implementation, we use $\picosat$~\cite{journals/jsat/Biere08}, $\minisat$~\cite{conf/sat/EenS03}, $\cmsat$~\cite{conf/sat/SoosNC09}, or Lingeling~\cite{conf/sat/Biere14}.

\subsection{Evaluation}

In our evaluation, we show that the established theoretical separations shown in the last section translate to a significant empirical improvement.
The evaluation is structured by the following three hypothesizes:
First, the strengthen and expansion refinement give a significant improvement over the plain version of $\caqe$.
Combining both refinements is overall better than only applying one of them.
Second, we show that the improvement provided by the those refinements is independently of the underlying SAT solver.
Third, when comparing on a per instance basis, the combined refinement effects the runtime mostly positively.
We show that the improvement is up to three orders of magnitude.

\begin{table}[t]
  \caption{Number of solved instances of the QBFGallery~2014 and QBFEval~2016 benchmark sets.}
  \label{tbl:qbfgallery2014}
  \centering
  \begin{tabular}{lrrrrrrrrr}
    \hline\noalign{\smallskip}
    
    family & total & \multicolumn{4}{c}{$\caqe$-$\cmsat$} & $\rareqs$ & $\qesto$ & $\depqbf$ & $\ghostq$ \\
    && {\tiny plain} & {\tiny strengthen} & {\tiny expansion} & {\tiny both} \\
    \noalign{\smallskip}
    \hline
    \noalign{\smallskip}
    eval2012r2 & 276 & 128 & 129 & 146 & \textbf{149} & 134 & 132 & 139 & 145 \\
    bomb       & 132 & 94 & \textbf{95} & 94 & 94 &  82 & 78 & 80 & 82 \\
    complexity & 104 & 60 & 68 & 86 & 85 & \textbf{90} & 76 & 51 & 43 \\
    dungeon    & 107 & 60 & 65 & \textbf{70} & \textbf{70} & 61 & 57 & 67 & 50 \\
    hardness   & 114 & 108 & 102 & \textbf{109} & 101 & 69 & 106 & 80 & 51 \\
    planning   & 147 & 45 & 93 & 65 & 95 & \textbf{144} & 55 & 38 & 13 \\
    testing    & 131 & 91 & 86 & 93 & 91 & 95 & 90 & 99 & \textbf{113} \\
    preprocessing & 242 & 86 & 93 & 105 & \textbf{110} & 107 & 104 & 108 & 60 \\
    \noalign{\smallskip}
    \hline
    \noalign{\smallskip}
    gallery2014 & 1253 & 672 & 731 & 768 & \textbf{795} & 782 & 698 & 662 & 557 \\
    \noalign{\smallskip}
    \hline
    \noalign{\smallskip}
    eval2016   & 825 & 607 & 611 & 635 & 636 & \textbf{644} & 623 & 598 & 595 \\
    \noalign{\smallskip}
    \hline
    \hline
    \noalign{\smallskip}
    all        & 2078 & 1279 & 1342 & 1403 & \textbf{1431} & 1426 & 1321 & 1260 & 1152 \\
    \noalign{\smallskip}
    \hline
  \end{tabular}
\end{table}

We compare our implementation against $\rareqs$~\cite{journals/ai/JanotaKMC16}, $\qesto$~\cite{conf/ijcai/JanotaM15}, $\depqbf$ in version 5.0~\cite{journals/jsat/LonsingB10}, and $\ghostq$~\cite{conf/sat/KlieberSGC10}.
For every solver except $\ghostq$, we use $\bloqqer$~\cite{conf/cade/BiereLS11} in version 031 as preprocessor.
For our experiments, we used a machine with a $3.6\,\text{GHz}$ quad-core Intel Xeon processor and $32\,\text{GB}$ of memory.
The timeout and memout were set to $10$ minutes and $8\,\text{GB}$, respectively.
Table~\ref{tbl:qbfgallery2014} shows number of solved instances on the QBFGallery~2014 benchmark set, broken down by benchmark family, as well as the more recent QBFEval~2016 benchmark set.
For $\caqe$, we only report on the best performing version, that is the one using $\cmsat$ as a backend solver.

The table shows that the strengthen and expansion refinement individually improve over the plain version of $\caqe$ in the number of solved instances.
Further, the combination of both refinements is the overall best solver, followed by $\rareqs$.

In the following, we refer to the combination of strengthen and expansion refinement as extended refinements.
We want to detail the improvements due to the extended refinements and show their independence of the backend solver.
The plot in Fig.~\ref{fig:internal_comparison} depicts the effect of the extended refinements with respect to the solved instances.
The improvements in the number of solved instances are independent from the choice of the underlying SAT solver and range between $100$ to $150$ more instances solved compared to the plain version of $\caqe$.

\begin{figure}[t]
  \centering
  \begin{tikzpicture}
    \begin{axis}[xlabel=\# solved instances,ylabel=time (sec.),width=.95\columnwidth,height=7.5cm,ymin=0,ymax=600,xmin=200,xmax=810,legend entries={$\cmsat$,+expansion,$\lingeling$,+expansion,$\picosat$,+expansion,$\minisat$,+expansion},mark size=1.8pt,
        legend style={
          at={(.33,0.95)},
          anchor=north east,
          draw=none}]
        ]]
        
      \addplot+[blue,solid,mark=*,mark options={fill=blue}] table {plots/caqe-expansion-qbfgallery-2014_cactus_caqe-cmsat-g0.dat};
      \addplot+[blue,dashed,mark=o,mark options={fill=blue,solid}] table {plots/caqe-expansion-qbfgallery-2014_cactus_caqe-cmsat-expansion-strong-unsat-g0.dat};
      
      \addplot+[red,solid,mark=square*,mark options={fill=red}] table {plots/caqe-expansion-qbfgallery-2014_cactus_caqe-lingeling-g0.dat};
      \addplot+[red,dashed,mark=square,mark options={fill=red,solid}] table {plots/caqe-expansion-qbfgallery-2014_cactus_caqe-lingeling-expansion-strong-unsat-g0.dat};
      
      \addplot+[green,solid,mark=triangle*,mark options={fill=green}] table {plots/caqe-expansion-qbfgallery-2014_cactus_caqe-picosat-g0.dat};
      \addplot+[green,dashed,mark=triangle,mark options={fill=green,solid}] table {plots/caqe-expansion-qbfgallery-2014_cactus_caqe-picosat-expansion-strong-unsat-g0.dat};
      
      \addplot+[orange,solid,mark=diamond*,mark options={fill=orange}] table {plots/caqe-expansion-qbfgallery-2014_cactus_caqe-minisat-g0.dat};
      \addplot+[orange,dashed,mark=diamond,mark options={fill=orange,solid}] table {plots/caqe-expansion-qbfgallery-2014_cactus_caqe-minisat-expansion-strong-unsat-g0.dat};
    \end{axis}
  \end{tikzpicture}
  \caption{Effect of the expansion refinement on the different configurations of $\caqe$ on the GBFGallery~2014 benchmark sets.}
  \label{fig:internal_comparison}
\end{figure}
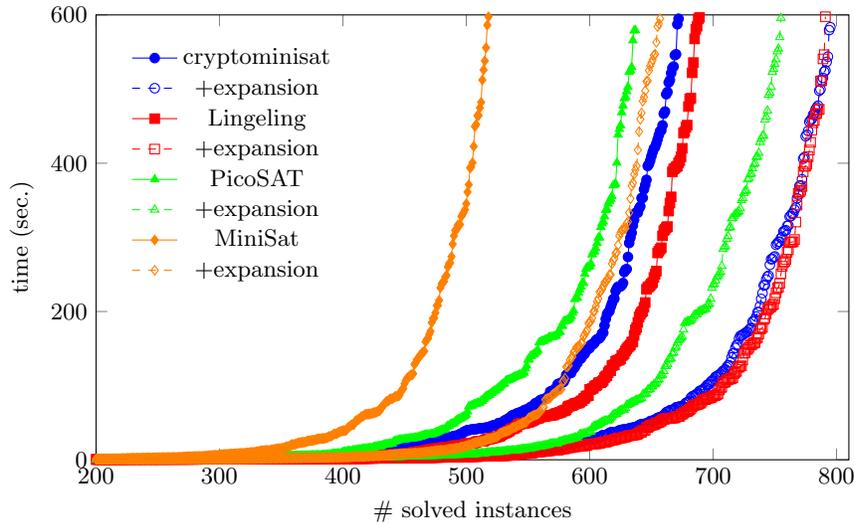

The scatter plot depicted in Fig.~\ref{fig:scatter_expansion} compares the running times of plain $\caqe$ to the one using extended refinements (both using $\cmsat$) on a per instance basis.
Marks below the diagonal means that the variant using extended refinements is faster.
It is remarkable that the extended refinements have mostly positive effect on the solving times.
Only a few instances saw a significant increase in solving time and even less timed out with extended refinements while being solved before.
On the other hand, we see improvements in solving time that exceed three orders of magnitude.
This is an empirical confirmation of our goal stated before that our implementation of expansion-refinement adds performance characteristic of expansion-based solvers while keeping the characteristics of the clausal-abstraction algorithm.

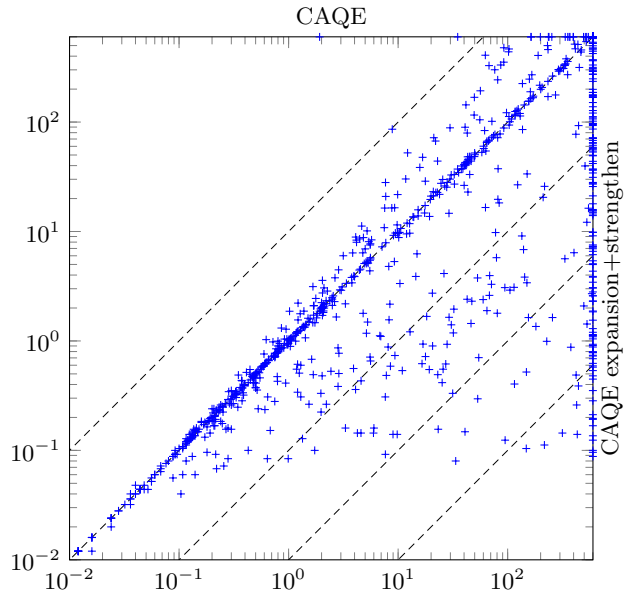
\begin{figure}[t]
  \centering
  \begin{tikzpicture}
    \begin{loglogaxis}[
      small,
      only marks,
      axis equal,
      xmin=0.01,xmax=601,
      ymin=0.01,ymax=601,
      enlargelimits=false,
      width=.7\columnwidth,
      height=.7\columnwidth,
      mark size=1.5pt,
      xlabel=$\caqe$,
      xlabel style={
        at={(axis description cs:0.5,1)},
        anchor=south,
      },
      ylabel=$\caqe$ expansion+strengthen,
      ylabel style={
        at={(axis description cs:1,0.5)},
        anchor=north,
      }]
      \addplot+[mark=+] table {plots/caqe-expansion-qbfgallery-2014_scatter_caqe-cmsat_vs_caqe-cmsat-expansion-strong-unsat.dat};
      
      \draw[densely dashed,very thin] (0.001,0.001) -- (601,601);
      \draw[densely dashed,very thin] (0.0001,0.001) -- (60.1,601);
      \draw[densely dashed,very thin] (0.001,0.0001) -- (601,60.1);
      \draw[densely dashed,very thin] (0.001,0.00001) -- (601,6.01);
      \draw[densely dashed,very thin] (0.001,0.000001) -- (601,0.601);
    \end{loglogaxis}
  \end{tikzpicture}
  \caption{Scatter plot comparing the solving time (in sec.) of $\caqe$ with and without extended refinement.}
  \label{fig:scatter_expansion}
\end{figure}

\section{Related Work}

$\qres$~\cite{journals/iandc/BuningKF95} is a variant of propositional refutation that is sound and refutation complete for QBF.
There have been extensions proposed to $\qres$, like long-distance resolution~\cite{conf/iccad/ZhangM02} and universal resolution~\cite{conf/cp/Gelder12}, some which are implemented in the QCDCL solver $\depqbf$~\cite{journals/jsat/LonsingB10}.
Recently, there has also been extensions proposed that extend $\qres$ by more generalized axioms~\cite{conf/sat/LonsingES16}.
In some sense, the $(\expresrule)$ rule presented in this paper can be viewed as an new axiom rule for the $\redres$ calculus.

The $\expres$ calculus~\cite{journals/tcs/JanotaM15} was introduced to allow reasoning over expansion-based QBF solving, exemplified by the QBF solver $\rareqs$~\cite{journals/ai/JanotaKMC16}.
The work on $\redres$ was motivated by the same desire, namely understanding the performance of the recently introduced QBF solvers $\caqe$~\cite{conf/fmcad/RabeT15} and $\qesto$~\cite{conf/ijcai/JanotaM15}.
The incomparability of $\expres$ and $\redres$~\cite{journals/tcs/JanotaM15,conf/stacs/BeyersdorffCJ15} lead to the creation of stronger proof systems that unify those calculi, like $\ircalc$~\cite{conf/mfcs/BeyersdorffCJ14}.
Further separation results, between variants of $\ircalc$ and variants of $\qres$, were given in~\cite{conf/stacs/BeyersdorffCJ15}.
Those extensions, however, do not have accompanying implementations.
This also applies to recent work that is based on first-order resolution~\cite{conf/sat/Egly16}.

There are two well-known restrictions to $\qres$, that is level-ordered and tree-like $\qres$.
Those restricted calculi were shown to be incomparable~\cite{journals/ipl/MahajanS16}.
QCDCL based solver exhibit level-ordered proofs~\cite{conf/sat/Janota16} and it was shown that $\expres$ $p$-simulates tree-like $\qres$~\cite{journals/tcs/JanotaM15}.
We showed that $\redres$ is polynomial simulation equivalent to level-ordered $\qres$, which explains similar performance characteristics of the underlying solvers.
Further, the strengthening rule presented in this paper can be viewed as a first step towards breaking the level-ordered restriction.
The $\redexpres$ calculus $p$-simulates level-ordered and tree-like $\qres$.

\section{Conclusion}

In this paper, we have introduced a new QBF proof calculus $\redres$ and showed that it is suitable for describing CEGAR based solving algorithms.
We defined two extensions of the $\redres$ calculus and showed that there is a theoretical advantage over the basic calculus.
Based on this foundation, we implemented an expansion refinement in the solver $\caqe$ and evaluated it on standard QBF benchmark sets.
Our experiments show that our new implementation significantly outperforms the previous one, with little to no negative impact, making it one of the most competitive QBF solver available.
We have also shown that our theoretical considerations and the consequent algorithmic change explains those practical gains.

In future work, we want to improve the implementation by exploring heuristics for the application of the different refinements and we want to explore alternative versions of the strengthening rule presented in this paper.

\paragraph{Acknowledgments.}
I thank Christopher Hahn and the anonymous reviewers for their comments on earlier versions of this paper.

\bibliographystyle{splncs03}
\bibliography{main}

\end{document}